\documentclass[
  aps,
  prx,
  reprint,
  superscriptaddress,
  bibnotes,
  longbibliography
]{revtex4-2}

\usepackage[utf8]{inputenc}
\usepackage[american]{babel}
\usepackage[normalem]{ulem}
\usepackage{mathpazo}
\usepackage{lmodern}
\usepackage[protrusion=true,expansion=true]{microtype}
\usepackage{physics}
\usepackage{graphicx}
\graphicspath{{Graph/}{./Graph/}}
\usepackage{indentfirst}
\usepackage{braket}
\usepackage{amsmath}
\usepackage{amssymb}
\usepackage{amsthm}
\usepackage{capt-of}
\usepackage{booktabs}
\usepackage{placeins}
\usepackage[title]{appendix}
\usepackage[dvipsnames]{xcolor}
\usepackage[unicode=true,bookmarks=false,breaklinks=true,colorlinks=true,linkcolor=blue,citecolor=blue,urlcolor=blue]{hyperref}
\usepackage{tikz}
\usetikzlibrary{arrows.meta,calc,fit,decorations.pathreplacing,trees}
\usepackage{orcidlink}

\usepackage[most]{tcolorbox}
\newcounter{protocol}
\newenvironment{protocolcard}[1]{%
  \refstepcounter{protocol}%
  \begin{tcolorbox}[
    enhanced,
    breakable=false,
    colframe=OliveGreen,
    colback=OliveGreen!2,
    boxrule=0.55pt,
    arc=1.2pt,
    left=6pt,right=6pt,top=5pt,bottom=5pt,
    before upper={\small},
    before skip=7pt,after skip=7pt
  ]%
  \noindent{\normalsize\bfseries
  \textcolor{OliveGreen}{Protocol~\theprotocol. #1}}\par\vspace{2pt}%
}{\end{tcolorbox}}
\newcommand{\protocolstep}[1]{%
  \par\smallskip\textbf{\color{OliveGreen}#1}\par\vspace{1pt}}

\newcommand{\iu}{\mathrm{i}}

\newcommand{\Real}{\operatorname{Re}}
\newcommand{\Imag}{\operatorname{Im}}
\newcommand{\E}{\mathbb{E}}
\newcommand{\Prob}{\mathbb{P}}
\newcommand{\ctrl}{\operatorname{ctrl}}
\newcommand{\Dph}{D_{\mathrm{ph}}}
\newcommand{\Uchk}{U_{\mathrm{chk}}}
\newcommand{\WR}{W_{\mathbb R}}
\newcommand{\HopfReal}{\mathrm{HopfReal}}
\newcommand{\HopfComplex}{\mathrm{HopfComplex}}
\newcommand{\EO}{E_O}
\newcommand{\bigO}{\mathcal{O}}
\newcommand{\indicator}[1]{\mathbf{1}\!\left\{#1\right\}}
\newcommand{\maybegraphic}[2][]{%
  \IfFileExists{#2}{\includegraphics[#1]{#2}}{%
    \fbox{\parbox[c][4.2cm][c]{0.92\linewidth}{\centering
    External circuit figure not found.\\[3pt]\texttt{\detokenize{#2}}}}%
  }%
}

\theoremstyle{definition}
\newtheorem{definition}{Definition}
\newtheorem{assumption}{Assumption}
\newtheorem{theorem}{Theorem}
\newtheorem{lemma}{Lemma}

\newcommand{\HKUSTGZ}{\affiliation{Thrust of Artificial Intelligence, Information Hub, The Hong Kong University of Science and Technology (Guangzhou), Guangzhou 511453, China}}
\newcommand{\QSC}{\affiliation{Quantum Science Center of Guangdong-Hong Kong-Macao Greater Bay Area, Shenzhen 518045, China}}

\hypersetup{
  pdftitle={Compass in the Mirror: Quantum Backpropagation with the Hopf Ansatz},
  pdfauthor={Ruge Lin, Ranyiliu Chen and Xin Wang},
  pdfsubject={Global-frame quantum backpropagation and layer-local gradients from the checkpoint method},
  pdfkeywords={quantum backpropagation, Hopf ansatz, variational quantum circuits, gradient estimation}
}

\begin{document}

\title{\texorpdfstring{Compass in the Mirror: \\
    Quantum Backpropagation with the Hopf Ansatz}{Compass in the Mirror: Quantum Backpropagation with the Hopf Ansatz}}

\author{Ruge Lin\,\orcidlink{0000-0002-9297-7622}}
\email[Corresponding author: ]{gogoko699@gmail.com}
\HKUSTGZ

\author{Ranyiliu Chen\,\orcidlink{0000-0001-6060-6192}}
\QSC

\begin{abstract}
  A variational quantum state may have many control angles, and its complete gradient contains one derivative for each. The task is to turn access to selected derivatives into a shared finite-shot record for the complete vector. The Hopf ansatz supplies the setting: it constructs any normalized pure state from nested two-way magnitude splits and, for complex states, leaf phases, with one directly programmed angle per split or phase. Earlier Hopf constructions prepare the state, recover coordinates from known amplitudes, and provide indexed derivative access. With a phase-calibrated controlled reflection, the inverse Hopf frame maps each normalized magnitude direction to a computational label, so every all-$X$ outcome contributes to every magnitude derivative. A separate signed leaf record supplies the complex phase block. A checkpoint variant reverses only the suffix below a selected depth and returns that layer's gradient block. For a complete chart with $M$ real-valued parameters, sufficient execution counts are $\bigO(\log\log M)$ for the global method and $\bigO(\log M\log\log M)$ for the complete checkpoint schedule at fixed additive coordinate accuracy and confidence. The global saving is record reuse, not a reduction of the $M$-entry output. Under the designated direct-angle schedules, the global method's assigned Hopf CNOT work and input/output-inclusive storage have at most logarithmic overhead relative to a matched scalar evaluation. The same Hopf geometry thus carries a state forward and the objective's complete coordinate response backward.
\end{abstract}

\maketitle

\section{Introduction}
\label{sec:introduction}

Variational quantum algorithms optimize a scalar objective by changing many control parameters in a state-preparation circuit. A selected derivative describes the response along one control, whereas a model update may require the complete vector. When separate measurement procedures are used for separate coordinates, much of the state preparation and objective evaluation is repeated. The task studied here is to use the structure of the state coordinates to make one quantum outcome informative about many derivatives at once.

The Hopf ansatz provides that structure~\cite{lin2026hopf}. It constructs computational-basis amplitudes through nested two-way splits: one angle distributes amplitude between two groups of basis states, and later splits refine each group. These splits form a balanced binary tree. Each magnitude coordinate is the physical rotation angle at its designated split, while each complex leaf phase is programmed directly. This one-coordinate--one-angle correspondence keeps the state representation, inverse coordinates, local geometry, and counted circuit controls in a common language rather than treating the state as the output of an arbitrary compiler.

Ref.~\cite{lin2026hopf} supplied direct-angle circuits for every normalized real or complex pure state, a classical inverse that recovers coordinates from known amplitudes, and a diagonal metric with normalized coordinate directions. Indexed tangent and signed-branch state preparations also made one selected derivative accessible. Together, these capabilities place state preparation, objective evaluation, and coordinate updates in one variational loop~\cite{cerezo2021variational}. The present construction extends that indexed interface to collective readout: it compiles the normalized magnitude directions into computationally addressed columns and uses their common objective response to produce a shared finite-shot record.

At a fixed Hopf parameter point, the requested output is the complete classical vector of raw coordinate derivatives of a scalar expectation. For $n$ qubits and $N=2^n$, the real and complex charts have $M=N-1$ and $M=2N-1$ real-valued parameters, respectively, so the output grows exponentially with $n$. The objective is assumed to admit a controlled reflection with known relative phase. This coherent access keeps the state and its reflected image in two branches whose interference exposes the transition moments needed for differentiation.

The addressed Hopf frame is a reversible basis change with the prepared state in its zero column and each normalized magnitude direction in a distinct labeled column. After the controlled reflection, the inverse frame transfers the relevant transition moments to those labels. Measuring the ancilla and system in $X$ produces one bit string whose parities supply signed samples for every magnitude direction. The samples decoded from one outcome are correlated, but each complete depth record has fixed Euclidean norm. A signed histogram and a single Walsh--Hadamard transform evaluate all the parity averages in a classical batch.

The complex construction retains this magnitude record after dressing the frame with the leaf phases. Phase derivatives are already localized in the computational basis, so a separate signed leaf record supplies the phase block without an inverse frame. For selected-depth access, the checkpoint construction reverses only the suffix below one chosen depth; the measured prefix identifies the derivative within that layer. The global frame is the principal collective construction. Checkpointing trades cross-depth record reuse for a shorter, layer-local reverse circuit.

Every magnitude-depth record has norm two, independently of its width. Vector concentration therefore controls all coordinates at a depth without a coordinatewise union bound; only the $n$ depth-level failure events need to be combined. For simultaneous additive coordinate accuracy $\varepsilon$ and confidence $1-\delta$, sufficient execution counts are
\begin{align*}
  S_{\mathrm{glob}} &= \bigO\!\left(\frac{1+\log(n/\delta)}{\varepsilon^2}\right),\\
  S_{\mathrm{chk}} &= \bigO\!\left( \frac{n\{1+\log(n/\delta)\}}{\varepsilon^2} \right).
\end{align*}
At fixed accuracy and confidence, these become $\bigO(\log\log M)$ and $\bigO(\log M\log\log M)$ independent executions. The saving comes from reusing each global record across depths; the classical output still contains $M$ entries.

The resource analysis counts controlled-reflection calls, assigned Hopf CNOT work, classical decoding, and storage separately. The matched comparison follows Ref.~\cite{abbas2023quantum}, with the logarithmic structured-circuit scale of Ref.~\cite{bowles2025backprop} as the benchmark. Under the designated direct-angle schedules, the global method has $\bigO(M\log M\log\log M)$ assigned Hopf work and meets this benchmark relative to a scalar evaluation using the same forward circuit and controlled reflection. Walsh decoding costs $\bigO(M\log M)$. These guarantees concern absolute raw-coordinate accuracy; relative or directional guarantees depend on a nonzero gradient signal, and metric-rescaled outputs have additional conditioning requirements.

Existing gradient methods address different combinations of access and output. Parameter shift and other low-depth rules commonly obtain selected coordinate derivatives from ordinary objective evaluations~\cite{mitarai2018quantum, schuld2019evaluating,banchi2021measuring,wierichs2022general, harrow2021lowdepth}. Classical simulator adjoints instead reverse a stored simulation~\cite{jones2020efficient}. Other quantum methods reuse measurements through coherent parameter registers, commutation, algebraic structure, generalized Hadamard tests, stabilizer descriptions, or classical-shadow measurements~\cite{verdon2018baqprop,bowles2025backprop, heidari2024lie,huang2020predicting,heidari2023shadow,li2024flexible, he2023singlecircuit,chinzei2025tradeoff}. Random-direction methods return stochastic search information whose variance and optimizer role differ from those of a uniformly accurate materialized vector~\cite{spall1992multivariate, coyle2026adaptive}. These approaches are not totally ordered. Hopf-QBP combines the stronger controlled-reflection interface with an addressed differential frame, allowing every global outcome to contribute to the complete raw magnitude-coordinate vector. Its reverse organization also connects to classical backpropagation, adjoint control, and checkpointed computation~\cite{rumelhart1986learning,baydin2018automatic, khaneja2005optimal,griewank2000revolve}.

The resulting interface may be useful when complete state-coordinate updates are needed and the objective admits an efficient controlled Pauli or reflection description. In variational scientific computing, such settings include deflation, simulation, process approximation, linear systems, and differential equations~\cite{higgott2019vqd,yuan2019vqs, endo2020generalprocesses,bravoprieto2023vqls,sato2021poisson,sarma2024pde}. Potential quantum-machine-learning uses include trainable embeddings, metric learning, kernel alignment, generator updates against fixed binary discriminators, and fidelity-trained pure-state models~\cite{lloyd2020quantumembeddings, sahin2024kernelalignment,dallairedemers2018qgan,niu2022eqgan, liu2020tomography}. Sensing and discrimination may use the same interface to optimize probes, sensor networks, and binary tests~\cite{koczor2020metrology, marciniak2022programmablesensors,subramanian2024hypothesistesting}. Checkpoint records may also support layer-local quantum-control updates when the control decomposition follows the Hopf-tree depth. In each setting, the data map, controlled-reflection compiler, aggregation rule, and optimizer remain application-specific; the result here supplies the coordinate-gradient interface rather than a complete application algorithm.

Section~\ref{sec:access-model} uses the two-qubit case to introduce the local differential pattern, objective access, and circuit contracts. Section~\ref{sec:reverse-geometry} builds the complete Hopf differential and its forward interfaces. Sections~\ref{sec:global-frame} and \ref{sec:complex-global} develop the real and complex global records, and Sec.~\ref{sec:checkpoint-adjoint} gives the selected-depth checkpoint construction. Section~\ref{sec:resource-analysis} combines their statistical and matched-resource guarantees. Protocols~\ref{prot:global-frame-gradient} and \ref{prot:checkpoint-gradient} summarize the two executable procedures. Appendix~\ref{app:four-qubit-example} provides a complete four-qubit gate-level companion, and Appendix~\ref{app:all-proofs} contains the proofs, accuracy results, and compiler derivations.

\section{The two-qubit clue and the reverse task}
\label{sec:access-model}

\subsection{One split, three directions}
\label{subsec:two-qubit-clue}

The two-qubit case contains the local split-and-complement pattern used throughout the paper. Take $n=2$ and $N=4$. The two depth-one subtree states are
\begin{align*}
  \ket{\sigma_2} &= \cos\theta_2\ket{00} + \sin\theta_2\ket{01},\\
  \ket{\sigma_3} &= \cos\theta_3\ket{10} + \sin\theta_3\ket{11},
\end{align*}
and the root combines them as
\begin{equation*}
  \ket{\psi^{\mathbb R}} = \cos\theta_1\ket{\sigma_2} + \sin\theta_1\ket{\sigma_3}.
\end{equation*}
Differentiating one split replaces its local state direction by the orthogonal partner,
\begin{align*}
  \ket{e_1^{\mathbb R}} &=-\sin\theta_1\ket{\sigma_2} +\cos\theta_1\ket{\sigma_3},\\
  \ket{e_2^{\mathbb R}} &=-\sin\theta_2\ket{00} +\cos\theta_2\ket{01},\\
  \ket{e_3^{\mathbb R}} &=-\sin\theta_3\ket{10} +\cos\theta_3\ket{11}.
\end{align*}
The derivatives are
\begin{align*}
  \partial_{\theta_1}\ket{\psi^{\mathbb R}} &=\ket{e_1^{\mathbb R}},\\
  \partial_{\theta_2}\ket{\psi^{\mathbb R}} &=\cos\theta_1\ket{e_2^{\mathbb R}},\\
  \partial_{\theta_3}\ket{\psi^{\mathbb R}} &=\sin\theta_1\ket{e_3^{\mathbb R}}.
\end{align*}
Every coordinate has one normalized direction, and a descendant derivative is weighted by the amplitude arriving at its node. The directions are mutually orthogonal: the two descendants occupy disjoint subtrees, and each local complement is orthogonal to the subtree state used by its ancestor. The balanced $n$-qubit construction repeats this split-and-complement rule at every internal node.

\subsection{What a reusable reverse record must provide}
\label{subsec:reverse-requirement}

For a normalized state family and a parameter-independent Hermitian observable, let
\begin{equation}
  \EO(\boldsymbol\theta) = \bra{\psi(\boldsymbol\theta)}O\ket{\psi(\boldsymbol\theta)}
  \label{eq:objective}
\end{equation}
be the scalar expectation objective. A real coordinate changes this objective through the transition moment
\begin{equation}
  \partial_{\theta_j}\EO = 2\Real\bra{\partial_{\theta_j}\psi}O\ket\psi.
  \label{eq:generic-gradient-transition}
\end{equation}
Normalization gives $\Real\braket{\psi|\partial_{\theta_j}\psi}=0$. Any component of the state differential parallel to $\ket\psi$ therefore contributes no change to the expectation; only the orthogonal component is needed.

A reusable reverse record needs four ingredients: orthogonal differential directions, their coordinate weights, a reversible map to computational columns, and a classical rule that reuses one outcome across a coordinate block. The Hopf tree supplies the weighted differential frame; controlled interference and classical decoding convert it into shared gradient records. Sections~\ref{sec:reverse-geometry}--\ref{sec:checkpoint-adjoint} construct the block estimators. Theorem~\ref{thm:hopf-quantum-backpropagation} gives their accuracy, work, and storage guarantees, and Appendix~\ref{app:generic-reverse} proves the common interference identities.

\subsection{Objective access and output target}
\label{subsec:qbp-target}

Fix a known Hopf state circuit, a parameter point, and a scalar expectation objective. The required output is the complete classical vector of raw Hopf-coordinate derivatives. The construction assumes that the objective is available coherently as a controlled reflection with known relative phase.

\begin{assumption}[Phase-calibrated controlled reflection]
  \label{ass:controlled-reflection}
  The observable is a known, parameter-independent Hermitian unitary,
  \begin{equation*}
    O=O^\dagger, \qquad O^2=I,
  \end{equation*}
  and the exact controlled operation
  \begin{equation*}
    \ctrl(O) = \ket0\!\bra0\otimes I + \ket1\!\bra1\otimes O
  \end{equation*}
  is available with a known relative phase between its branches, called the controlled-branch phase.
\end{assumption}

A classical description of $O$, or uncontrolled access to it, does not supply this interface. If the controlled branch implements $e^{\iu\gamma}O$, the $X$- and $Y$-readouts contain $\Real(e^{\iu\gamma}z)$ and $\Imag(e^{\iu\gamma}z)$. A known $\gamma$ is removed by a phase correction or calibrated equatorial readout. An unknown one changes the fixed decoder but does not affect the uncontrolled channel $O(\cdot)O^\dagger$.

Ref.~\cite{lin2026hopf} obtained the same moments from separately prepared state, tangent, and signed-branch states. Calibrated controlled access instead creates the coherent branch relation used by the shared records. This is a Hadamard-test instance~\cite{mitarai2019methodology}, with the calibration fixing the quadrature signs~\cite{araujo2014control}. A detector bit $b\in\{0,1\}$ represents eigenvalue $(-1)^b$.

The complete Hopf chart has $M$ real-valued parameters and returns
\begin{equation*}
  \nabla_{\boldsymbol\theta}\EO = \left( \partial_{\theta_1}\EO,\ldots,\partial_{\theta_M}\EO \right)^{\mathsf T} \in\mathbb R^M.
\end{equation*}
Throughout, $0<\varepsilon,\delta<1$.

\begin{definition}[Complete-gradient accuracy]
  \label{def:complete-gradient}
  An estimate $\widehat{\nabla_{\boldsymbol\theta}\EO}\in\mathbb R^M$ is $(\varepsilon,\delta)$ accurate if
  \begin{equation}
    \Prob\!\left[ \left\|\widehat{\nabla_{\boldsymbol\theta}\EO} -\nabla_{\boldsymbol\theta}\EO\right\|_\infty \leq\varepsilon \right] \geq1-\delta.
    \label{eq:complete-gradient-target}
  \end{equation}
\end{definition}

The condition gives simultaneous absolute coordinatewise accuracy for the complete classical output. Each shared record has a coordinate block as its mean.

\begin{definition}[State-coordinate quantum backpropagation]
  \label{def:state-coordinate-qbp}
  For a known differentiable state circuit $\boldsymbol\theta\mapsto\ket{\psi(\boldsymbol\theta)}$, a reverse protocol is called \emph{state-coordinate quantum backpropagation} if controlled objective interference is followed by a known resolution of the state differential. The resolution may be a reverse unitary or the identity when the relevant differential directions are already localized in the measurement basis. The protocol returns a vector-valued record $Z_{\mathcal I}$ satisfying
  \begin{equation}
    \E Z_{\mathcal I} = \left( 2\Real\bra{\partial_{\theta_j}\psi}O\ket\psi \right)_{j\in\mathcal I}
    \label{eq:qbp-block-definition}
  \end{equation}
  for a coordinate block $\mathcal I$. A complete protocol combines such shared blocks to produce the gradient in Definition~\ref{def:complete-gradient}.
\end{definition}

\medskip
\noindent\textbf{Matched resource benchmark.}
Following the complete-gradient versus function-evaluation comparison of
Ref.~\cite{abbas2023quantum}, we use the logarithmic overhead scale realized
in Ref.~\cite{bowles2025backprop}.  Under the direct-angle accounting of
Sec.~\ref{subsec:assigned-ledger}, the benchmark permits $\bigO(\log M)$
overhead in assigned Hopf CNOT work and input/output-inclusive storage.  The
matched scalar uses the same forward preparation and controlled reflection but
omits the reverse block; the gradient program materializes the complete vector.
The gradient protocols resolve $O\ket\psi$ globally in the addressed frame,
locally at a checkpoint, or directly in the leaf basis.

Each execution starts from a fresh product input and ends with destructive product-basis measurement. State preparation, reverse blocks, controlled reflection, readout basis changes, and classical processing are exact. Only sampling fluctuations enter the accuracy bounds. The system and branch ancilla use $n+1$ qubits, besides compiler and controlled-reflection workspace. All displayed execution and Hopf-work bounds are per reflection.

\medskip
\noindent\textbf{Reflection-sum objectives.}
Let $H=\sum_{\alpha=1}^{L}a_\alpha O_\alpha$ have fixed real coefficients
$a_\alpha$ and parameter-independent terms satisfying
$O_\alpha^\dagger=O_\alpha$ and $O_\alpha^2=I$.  Set
$\Lambda:=\sum_\alpha|a_\alpha|>0$ and assume calibrated controlled access
to each term.  Sample $\alpha$ with $p_\alpha=|a_\alpha|/\Lambda$ and return
$\widetilde Z_{\mathcal I}:=\Lambda\operatorname{sgn}(a_\alpha)
  Z_{\mathcal I}^{(\alpha)}$ for a record
$Z_{\mathcal I}^{(\alpha)}$ on block $\mathcal I$.  With
$E_H:=\bra\psi H\ket\psi$,
\begin{align*}
  \E\widetilde Z_{\mathcal I} &=\sum_\alpha a_\alpha\E Z_{\mathcal I}^{(\alpha)} =\left(\partial_{\theta_j}E_H\right)_{j\in\mathcal I},\\
  \|\widetilde Z_{\mathcal I}\|_2 &\leq\Lambda B \quad\text{whenever}\quad \|Z_{\mathcal I}^{(\alpha)}\|_2\leq B.
\end{align*}
The sufficient execution bounds therefore acquire a factor of $\Lambda^2$. The matched scalar uses the same term distribution and expected controlled-term charge $\sum_\alpha p_\alpha C(\ctrl(O_\alpha))$.

\subsection{Three circuit contracts}
\label{subsec:circuit-contracts}

The constructions use three levels of circuit equality:
\begin{align}
  U&=V, &&\text{full-unitary equality},\nonumber\\
  U\ket0^{\otimes n}&=V\ket0^{\otimes n}, &&\text{state-column equality},\nonumber\\
  UP&=VP, &&\text{active-interface equality}.
  \label{eq:three-circuit-contracts}
\end{align}
Here $P$ projects onto the interface subspace reached by the forward prefix. Full-unitary equality identifies the complete circuit action. State-column equality fixes the prepared state, which is sufficient when all subsequent operations are held fixed. Active-interface equality fixes exactly the columns reached at a checkpoint while leaving the unused completion free. The global-frame method uses designated columns of a complete frame; the checkpoint method uses the reached active interface.

This hierarchy separates forward correctness from compiler accounting. Any exact preparation with the same initialized state column may replace the forward circuit while the later controlled reflection, reverse block, and measurement remain fixed; the final record distribution is unchanged. The assigned-work statements below use the designated direct-angle Hopf schedules, which keep each chart coordinate directly tied to its physical control. A different state-equivalent compiler is evaluated with its own forward and reverse ledger. Appendix~\ref{app:four-qubit-example} shows all three logical contracts in one finite circuit family.

The two-qubit construction fixes the local differential pattern, and the access model identifies the transition moments to be recovered. Applying the split-and-complement rule throughout the tree supplies the weighted orthonormal frame and depth interfaces used by both reverse methods.

\section{The balanced Hopf differential and forward interfaces}
\label{sec:reverse-geometry}

\subsection{Repeating the local split across the tree}
\label{subsec:hopf-preliminaries}

Applying the local split-and-complement rule at every internal node produces a full differential frame and one computational marker per node. Let $n$ be the number of qubits and $N=2^n$. For $0\leq\ell<N$, the basis labels are
\begin{equation*}
  \ell=(q_n\cdots q_1)_2, \qquad \ket{b_\ell}=\ket{q_n\cdots q_1}.
\end{equation*}
A complete rooted binary tree has root $1$, children $2j$ and $2j+1$, internal nodes $1\leq j<N$, and leaves $N+\ell$. Internal nodes are indexed breadth first. The internal nodes at depth $d$ form
\begin{equation*}
  \mathcal L_d:=\{2^d,\ldots,2^{d+1}-1\}, \qquad 0\leq d<n.
\end{equation*}
Writing $j=2^d+r\in\mathcal L_d$, attach to that node the computational marker
\begin{equation*}
  \lambda(j):=(2r+1)2^{n-d-1}.
\end{equation*}
This is the leftmost leaf of the right child and will address the local complement created at node $j$.

To separate the coordinate blocks when needed, write
\begin{equation}
  \boldsymbol\theta_{\mathrm{mag}} :=(\theta_1,\ldots,\theta_{N-1}), \qquad \boldsymbol\theta_{\mathrm{ph}} :=(\theta_N,\ldots,\theta_{2N-1}).
  \label{eq:shared-hopf-coordinate-blocks}
\end{equation}
The real chart uses $\boldsymbol\theta=\boldsymbol\theta_{\mathrm{mag}}$, while the complex chart uses $\boldsymbol\theta=(\boldsymbol\theta_{\mathrm{mag}},\boldsymbol\theta_{\mathrm{ph}})$.

The real magnitude angles have ranges
\begin{align*}
  \theta_j&\in[0,\pi/2], &&1\leq j<N/2,\\
  \theta_j&\in[0,2\pi), &&N/2\leq j<N.
\end{align*}
The final layer carries arbitrary real signs. The inherited rotation convention is
\begin{equation}
  R_y(\theta)=e^{-\iu\theta Y} =
  \begin{pmatrix}
    \cos\theta & -\sin\theta \\
    \sin\theta & \cos\theta
  \end{pmatrix}.
  \label{eq:paper-Ry-convention}
\end{equation}
The native real state-preparation circuit is denoted $\HopfReal(\boldsymbol\theta_{\mathrm{mag}})$.

\subsection{Subtree states, complements, and weights}

For $0\leq\ell<N$, set the leaf states to
\begin{equation*}
  \ket{\sigma_{N+\ell}}:=\ket{b_\ell}.
\end{equation*}
Every internal node repeats the two-qubit split,
\begin{equation*}
  \ket{\sigma_j} = \cos\theta_j\ket{\sigma_{2j}} + \sin\theta_j\ket{\sigma_{2j+1}},
\end{equation*}
and the full real state is $\ket{\psi^{\mathbb R}(\boldsymbol\theta_{\mathrm{mag}})}:=\ket{\sigma_1}$. Within a construction at a fixed parameter point, the common argument $\boldsymbol\theta_{\mathrm{mag}}$ is suppressed.

The coefficient multiplying a subtree before its own split is the amplitude arriving at that node. Along the path from the root to an internal node, each left edge contributes a cosine and each right edge a sine. All such ancestor angles lie above the final sign-carrying layer and belong to $[0,\pi/2]$, so the incoming amplitude is nonnegative. Its square is denoted by $g^{\mathbb R}_{j,j}$. These quantities will also be the diagonal metric factors, and they obey
\begin{align}
  g^{\mathbb R}_{1,1}&=1,\nonumber\\
  g^{\mathbb R}_{2j,2j} &=g^{\mathbb R}_{j,j}\cos^2\theta_j,\nonumber\\
  g^{\mathbb R}_{2j+1,2j+1} &=g^{\mathbb R}_{j,j}\sin^2\theta_j,
  \label{eq:incoming-metric-recursion}
\end{align}
for internal children. Thus $\sqrt{g^{\mathbb R}_{j,j}(\boldsymbol\theta_{\mathrm{mag}})}$ is the incoming amplitude at node $j$.

The normalized direction orthogonal to the split state at node $j$ is its local complement,
\begin{equation}
  \ket{e_j^{\mathbb R}(\boldsymbol\theta_{\mathrm{mag}})} = -\sin\theta_j\ket{\sigma_{2j}} +\cos\theta_j\ket{\sigma_{2j+1}}.
  \label{eq:real-frame-column-recursion}
\end{equation}
The child-subtree states are orthonormal, so $\ket{e_j^{\mathbb R}}$ is normalized and orthogonal to $\ket{\sigma_j}$. Complements in disjoint subtrees have disjoint support. A descendant complement is also orthogonal to the descendant subtree state appearing in every ancestor combination. These support relations allow all local complements to coexist in one orthonormal basis.

The next result identifies the moving frame used by the reverse construction. It is the orthonormal basis whose first column is the state and whose remaining columns are the normalized coordinate directions, with completion columns at zero-weight coordinates.

\medskip
\noindent\textbf{Balanced Hopf differential guarantee.}
The set
\begin{equation}
  \left\{ \ket{\psi^{\mathbb R}}, \ket{e_1^{\mathbb R}},\ldots, \ket{e_{N-1}^{\mathbb R}} \right\}
  \label{eq:root-moving-frame}
\end{equation}
is an orthonormal basis. The incoming amplitude at node $j$ is also the coefficient of its coordinate derivative:
\begin{equation}
  \partial_{\theta_j} \ket{\psi^{\mathbb R}(\boldsymbol\theta_{\mathrm{mag}})} = \sqrt{g^{\mathbb R}_{j,j}(\boldsymbol\theta_{\mathrm{mag}})} \ket{e_j^{\mathbb R}}.
  \label{eq:hopf-derivative-factorization}
\end{equation}
Consequently,
\begin{equation}
  \Real\braket{ \partial_{\theta_i}\psi^{\mathbb R} | \partial_{\theta_j}\psi^{\mathbb R} } = \delta_{ij}g^{\mathbb R}_{j,j},
  \label{eq:hopf-diagonal-metric}
\end{equation}
so the squared incoming amplitudes are precisely the diagonal metric coefficients.

For $g^{\mathbb R}_{j,j}>0$, $\ket{e_j^{\mathbb R}}$ is the normalized coordinate tangent of Ref.~\cite{lin2026hopf}. When $g^{\mathbb R}_{j,j}=0$, the coordinate derivative vanishes and Eq.~\eqref{eq:real-frame-column-recursion} supplies the normalized completion column. For the objective in Eq.~\eqref{eq:objective},
\begin{equation}
  \partial_{\theta_j}\EO = 2\sqrt{g^{\mathbb R}_{j,j}(\boldsymbol\theta_{\mathrm{mag}})} \Real\bra{e_j^{\mathbb R}}O\ket{\psi^{\mathbb R}}.
  \label{eq:hopf-transition-gradient}
\end{equation}
Appendix~\ref{app:proof-real-geometry} proves the orthonormality, derivative, metric, and zero-weight statements. Appendix~\ref{app:four-qubit-metric} gives the four-qubit metric factors together with representative tangent-state formulas.

Each coordinate differential is a known weight times one moving-frame column. The reverse construction must make those columns accessible without preparing them separately. A depth-ordered forward completion first fixes the layer interfaces and the matched scalar baseline. The addressed frame then assigns a computational label to every normalized magnitude direction.

\subsection{Depth-ordered completion and the state-column contract}
\label{subsec:depth-completion-contracts}

A depth-ordered forward completion serves two later roles: it defines the designated scalar baseline and exposes the layer interfaces used by the checkpoint method. For $j=2^d+r\in\mathcal L_d$, define
\begin{align*}
  Q_{d,r} &:= \ket r\!\bra r_{q_n\cdots q_{n-d+1}} \otimes Y_{q_{n-d}}\otimes I,\\
  U_d &:= \prod_{r=0}^{2^d-1} \exp\!\left(-\iu\theta_{2^d+r}Q_{d,r}\right),\\
  \Uchk(\boldsymbol\theta_{\mathrm{mag}}) &:= U_{n-1}\cdots U_1U_0.
\end{align*}
The prefix projector is absent at $d=0$. Within one depth, the generators commute because their prefix projectors are orthogonal.

The depth completion and the native real circuit prepare the same initialized state:
\begin{equation}
  \HopfReal(\boldsymbol\theta_{\mathrm{mag}})\ket0^{\otimes n} = \Uchk(\boldsymbol\theta_{\mathrm{mag}})\ket0^{\otimes n} = \ket{\psi^{\mathbb R}}.
  \label{eq:depth-hopf-state}
\end{equation}
For correctness substitutions, write $U_{\mathrm{fwd}}^{\mathbb R}$ for any exact real preparation satisfying
\begin{equation}
  U_{\mathrm{fwd}}^{\mathbb R}\ket0^{\otimes n} =\ket{\psi^{\mathbb R}}.
  \label{eq:forward-preparation-contract}
\end{equation}
This is the real instance of the state-column principle in Sec.~\ref{subsec:circuit-contracts}. Native preparation may therefore be used in the record circuits, while $\Uchk$ supplies the checkpoint factorization and is the designated forward in the assigned Hopf CNOT ledger for the separated schedule. Both are direct-angle Hopf realizations: every $\theta_j$ remains the physical angle of its designated tree split. A further state-equivalent replacement preserves the record distributions but is assessed with its own compiler ledger. Appendix~\ref{app:proof-depth-completion} proves the common state column and the invariance of later measurements under its replacement; Appendix~\ref{app:four-qubit-depth-completion} displays the corresponding four-qubit depth circuit.

The remaining ingredient for the global method is a computationally addressed frame. Its inverse makes one reflected record readable across every magnitude depth.

\section{Global-frame method for the real Hopf chart}
\label{sec:global-frame}

\subsection{Construction of the addressed frame}

The addressed frame is a reversible circuit whose zero column is the prepared state and whose designated nonzero columns are the normalized magnitude directions. It places each local complement in the computational column marked by $\lambda(j)$. For $j=2^d+r\in\mathcal L_d$, define the leftmost leaf of its subtree by
\begin{equation*}
  \ell_0(j):=r2^{n-d}.
\end{equation*}
Let $G_j$ be the two-level rotation
\begin{align}
  G_j\ket{b_{\ell_0(j)}} &= \cos\theta_j\ket{b_{\ell_0(j)}} + \sin\theta_j\ket{b_{\lambda(j)}},
  \label{eq:frame-gate-anchor}\\
  G_j\ket{b_{\lambda(j)}} &= -\sin\theta_j\ket{b_{\ell_0(j)}} + \cos\theta_j\ket{b_{\lambda(j)}},
  \label{eq:frame-gate-marker}
\end{align}
and let it act as the identity elsewhere. Gates at one depth act on disjoint basis pairs. Applying parent layers before descendant layers gives
\begin{equation*}
  V_d:=\prod_{r=0}^{2^d-1}G_{2^d+r}, \qquad \WR(\boldsymbol\theta_{\mathrm{mag}}) :=V_{n-1}\cdots V_1V_0.
\end{equation*}

Read columnwise, the product has the following action. At node $j$, the gate $G_j$ creates the state and complement coefficient patterns on the two child anchors. Descendant gates then expand those anchors into the complete child-subtree states. The marker map assigns each resulting complement a distinct nonzero computational label.

\medskip
\noindent\textbf{Construction guarantee.}
The map $j\mapsto\lambda(j)$ is a bijection from the $N-1$ internal nodes to
the nonzero computational labels, and
\begin{equation}
  \WR(\boldsymbol\theta_{\mathrm{mag}})\ket0^{\otimes n} =\ket{\psi^{\mathbb R}}, \qquad \WR(\boldsymbol\theta_{\mathrm{mag}})\ket{b_{\lambda(j)}} =\ket{e_j^{\mathbb R}}.
  \label{eq:frame-column-identities}
\end{equation}
Every normalized magnitude direction therefore has a reversible computational address. Appendix~\ref{app:proof-real-frame-columns} proves the marker bijection and column induction. Appendix~\ref{app:four-qubit-real-frame} lists the $n=4$ addressed gates.

\subsection{Why the inverse frame is readable}

With the differential columns computationally addressed, controlled interference turns their transition amplitudes into jointly measurable parity moments. Whenever $W\ket0^{\otimes n}=\ket\psi$, applying the inverse frame after the controlled reflection gives
\begin{equation}
  \ket{\Omega_W} = \frac{ \ket0_a\ket0^{\otimes n} + \ket1_aW^\dagger O\ket\psi }{\sqrt2}.
  \label{eq:generic-frame-interference-state}
\end{equation}
For a basis label $k=(k_n\cdots k_1)_2$, define
\begin{equation*}
  X(k):=\bigotimes_{q=1}^nX_q^{k_q}, \qquad k\cdot\boldsymbol y :=\sum_{q=1}^nk_qy_q\pmod2.
\end{equation*}
An all-$X$ outcome $(b,\boldsymbol y)$ then has parity expectation
\begin{equation}
  \E\!\left[(-1)^{b+k\cdot\boldsymbol y}\right] = \Real\bra{b_k}W^\dagger O\ket\psi.
  \label{eq:frame-parity-identity}
\end{equation}
Thus one measured bit string supplies a $\{\pm1\}$ sample for every addressed frame column. Repeated outcomes estimate all corresponding transition moments. Setting $W=\WR$ and $k=\lambda(j)$ recovers Eq.~\eqref{eq:hopf-transition-gradient}. Appendix~\ref{app:generic-reverse} proves the generic identity, and Appendix~\ref{app:proof-real-readout} proves its Hopf-frame specialization.

\medskip
\noindent\textbf{What controlled interference contributes.}
Write
\[
  W^\dagger O\ket\psi = \sum_{\ell=0}^{N-1}v_\ell\ket{b_\ell}.
\]
Equation~\eqref{eq:frame-parity-identity} reads $\Real v_k$ linearly from the coherent two-branch state. For comparison, let $p_{\mathrm{unc}}(\boldsymbol y)$ be the all-$X$ probability obtained by applying $O$, then $W^\dagger$, and measuring only the system. Its Walsh coefficient is
\[
  \sum_{\boldsymbol y} (-1)^{k\cdot\boldsymbol y}p_{\mathrm{unc}}(\boldsymbol y) = \sum_{\ell=0}^{N-1}v_\ell v_{\ell\oplus k}^{*},
\]
where $\ell\oplus k$ is bitwise addition modulo two. System probabilities therefore encode amplitude autocorrelations. The coherent branch ancilla supplies the relative phase that turns the same parity analysis into the linear moment required by the gradient.

\subsection{Circuit template, outcomes, and decoder}

\medskip
\noindent\textbf{One circuit template.}
At a fixed parameter point, the native-forward real implementation prepares
$\ket+_a\ket0^{\otimes n}$ and applies
\begin{equation}
  \HopfReal(\boldsymbol\theta_{\mathrm{mag}}) \longrightarrow \ctrl(O) \longrightarrow \WR^\dagger(\boldsymbol\theta_{\mathrm{mag}}),
  \label{eq:real-global-circuit-order}
\end{equation}
followed by $X$-basis measurement of the ancilla and all $n$ system qubits. This gate sequence and readout define the circuit template. One fresh execution returns the joint $(n+1)$-bit outcome
\begin{equation*}
  (b_m,\boldsymbol y), \qquad b_m\in\{0,1\}, \qquad \boldsymbol y=(y_1,\ldots,y_n)\in\{0,1\}^n.
\end{equation*}
The state-column contract in Sec.~\ref{subsec:circuit-contracts} permits any exact $U_{\mathrm{fwd}}^{\mathbb R}$ satisfying Eq.~\eqref{eq:forward-preparation-contract} to serve as the forward block for correctness. The native circuit displays the template; the assigned-work comparison uses the designated direct-angle forward of Sec.~\ref{subsec:assigned-ledger}.

\medskip
\noindent\textbf{Single-execution record and direct average.}
For every internal node, the same joint outcome defines
\begin{equation}
  Z_j^{\mathbb R}(b_m,\boldsymbol y) := 2\sqrt{g^{\mathbb R}_{j,j}} (-1)^{b_m+\lambda(j)\cdot\boldsymbol y}.
  \label{eq:real-hopf-estimator}
\end{equation}
Given $S$ independent outcomes $\{(b_{m,t},\boldsymbol y_t)\}_{t=1}^{S}$, the empirical coordinate gradient is
\begin{align}
  \widehat{\partial_{\theta_j}\EO} &= \frac1S\sum_{t=1}^{S} Z_j^{\mathbb R}(b_{m,t},\boldsymbol y_t)\nonumber\\
  &= \frac{2\sqrt{g^{\mathbb R}_{j,j}}}{S} \sum_{t=1}^{S} (-1)^{b_{m,t}+\lambda(j)\cdot\boldsymbol y_t}.
  \label{eq:real-hopf-direct-average}
\end{align}
The same $S$ outcomes enter Eq.~\eqref{eq:real-hopf-direct-average} for every $j$; only the predetermined marker $\lambda(j)$ and known metric factor change. Records from one execution are correlated, but distinct executions are independent.

For $n=2$, $\lambda(1)=(10)_2$, $\lambda(2)=(01)_2$, and $\lambda(3)=(11)_2$; one outcome $(b_m,y_1,y_2)$ supplies
\begin{align*}
  Z_1^{\mathbb R} &=2(-1)^{b_m+y_2},\\
  Z_2^{\mathbb R} &=2\cos\theta_1(-1)^{b_m+y_1},\\
  Z_3^{\mathbb R} &=2\sin\theta_1(-1)^{b_m+y_1+y_2}.
\end{align*}
The same $S$ recorded outcomes enter all three averages.

\medskip
\noindent\textbf{Signed histogram and Walsh decoder.}
The classical decoder first accumulates the joint counts
\begin{equation}
  C_{b,\boldsymbol y} := \sum_{t=1}^{S} \indicator{b_{m,t}=b,\ \boldsymbol y_t=\boldsymbol y}
  \label{eq:real-global-joint-counts}
\end{equation}
The counts obey $\sum_{\boldsymbol y}(C_{0,\boldsymbol y}+C_{1,\boldsymbol y})=S$. Define their signed histogram by
\begin{equation*}
  h(\boldsymbol y) := C_{0,\boldsymbol y}-C_{1,\boldsymbol y} = \sum_{t:\boldsymbol y_t=\boldsymbol y}(-1)^{b_{m,t}}.
\end{equation*}
A fast Walsh--Hadamard transform~\cite{fino1976walsh} then evaluates all parity moments together:
\begin{equation}
  \widehat m_k = \frac1S\sum_{\boldsymbol y} h(\boldsymbol y)(-1)^{k\cdot\boldsymbol y}.
  \label{eq:fwht-moments}
\end{equation}
The $k=0$ entry is the marginal ancilla imbalance and estimates only $\EO$. The other system-resolved parity entries retain the frame components needed for the complete gradient. After a tree pass computes the metric factors,
\begin{equation}
  \widehat{\partial_{\theta_j}\EO} = 2\sqrt{g^{\mathbb R}_{j,j}}\widehat m_{\lambda(j)}.
  \label{eq:weight-rescaling}
\end{equation}
The transform is the batched implementation of the direct averages in Eq.~\eqref{eq:real-hopf-direct-average}; it requires no additional quantum execution. The complete classical decoder uses $\bigO(S+N\log N)$ additions and subtractions and stores $\bigO(N)$ auxiliary histogram values.

\subsection{Global real guarantee and cost at a glance}

\medskip
\noindent\textbf{Guarantee.}
For every internal node,
\begin{equation}
  \E Z_j^{\mathbb R}=\partial_{\theta_j}\EO, \qquad |Z_j^{\mathbb R}|\leq2\sqrt{g^{\mathbb R}_{j,j}}\leq2.
  \label{eq:real-hopf-estimator-properties}
\end{equation}
When $g^{\mathbb R}_{j,j}=0$, the record is identically zero. At depth $d$, collect the magnitude coordinates as
\begin{align*}
  Z_d^{\mathrm{fr}} &:= \left(Z_{2^d+r}^{\mathbb R}\right)_{r=0}^{2^d-1},\\
  \mu_d^{\mathrm{mag}} &:= \left(\partial_{\theta_{2^d+r}}\EO\right)_{r=0}^{2^d-1}.
\end{align*}
Every outcome satisfies
\begin{equation}
  \E Z_d^{\mathrm{fr}}=\mu_d^{\mathrm{mag}}, \qquad \|Z_d^{\mathrm{fr}}\|_2=2.
  \label{eq:global-depth-fixed-norm}
\end{equation}
Let $S_{\mathrm{mag}}^{\mathbb R}$ be the number of independent records in the real magnitude stream. The global-frame method repeats one template, and the same records serve every depth. The fixed-norm concentration result in Sec.~\ref{subsec:statistical-tools} gives the sufficient allocation in Eq.~\eqref{eq:complete-chart-shot-allocations}; one common count covers all coordinates and depths. Appendix~\ref{app:proof-real-readout} proves the parity sign, unbiasedness, zero-weight behavior, depth normalization, and Walsh decoder. These identities supply the real case of Theorem~\ref{thm:hopf-quantum-backpropagation}(A).

\medskip
\noindent\textbf{Cost at a glance.}
For the complete real chart, $M=N-1$.  At one parameter point, the template in
Eq.~\eqref{eq:real-global-circuit-order} is executed
$S_{\mathrm{mag}}^{\mathbb R}$ times.  Each execution returns one joint
$(n+1)$-bit outcome contributing to all $N-1$ coordinates.  Outcomes stream
into the $N$ signed-histogram bins, avoiding storage of the full raw record
list.  For the native and designated Hopf preparations, the forward block and
addressed inverse frame each require $\bigO(nN)$ assigned Hopf CNOT work per
execution~\cite{lin2026hopf}.  The application-dependent controlled reflection
and readout are reported separately.  Classical decoding costs
$\bigO(S_{\mathrm{mag}}^{\mathbb R}+N\log N)$ operations, uses $\bigO(N)$
auxiliary storage, and returns $N-1$ entries.  The live quantum register
contains the system, branch ancilla, and any compiler workspace.
Sec.~\ref{subsec:assigned-ledger} compares this assigned Hopf CNOT work with
the matched scalar program used in the theorem.

The real construction establishes the shared magnitude mechanism: one global-frame template is repeated, and every outcome is reused across all tree depths. The complex chart preserves this mechanism after phase dressing and adds a separate leaf-local record to complete the phase-coordinate block.

\section{Global-frame method for the complex Hopf chart}
\label{sec:complex-global}

\subsection{Complex chart and phase-dressed frame}

Using the coordinate blocks in Eq.~\eqref{eq:shared-hopf-coordinate-blocks}, write $\boldsymbol\theta=(\boldsymbol\theta_{\mathrm{mag}}, \boldsymbol\theta_{\mathrm{ph}})$. All magnitude angles now lie in $[0,\pi/2]$, and
\begin{equation*}
  \theta_{N+\ell}\in[0,2\pi), \qquad 0\leq\ell<N.
\end{equation*}
Define the nonnegative magnitude state and diagonal phase unitary by
\begin{align*}
  \ket{\psi_{\mathrm{mag}}(\boldsymbol\theta_{\mathrm{mag}})} &:= \Uchk(\boldsymbol\theta_{\mathrm{mag}})\ket0^{\otimes n},\\
  \Dph(\boldsymbol\theta_{\mathrm{ph}}) &:= \sum_{\ell=0}^{N-1} e^{\iu\theta_{N+\ell}}\ket{b_\ell}\!\bra{b_\ell},\\
  \ket{\psi^{\mathbb C}(\boldsymbol\theta)} &:= \Dph(\boldsymbol\theta_{\mathrm{ph}}) \ket{\psi_{\mathrm{mag}}(\boldsymbol\theta_{\mathrm{mag}})}.
\end{align*}
Write $x_\ell(\boldsymbol\theta):= \braket{b_\ell|\psi^{\mathbb C}(\boldsymbol\theta)}$. The native complex preparation satisfies
\begin{align}
  \HopfComplex(\boldsymbol\theta)\ket0^{\otimes n} &= \Dph(\boldsymbol\theta_{\mathrm{ph}}) \Uchk(\boldsymbol\theta_{\mathrm{mag}})\ket0^{\otimes n}\nonumber\\
  &=\ket{\psi^{\mathbb C}(\boldsymbol\theta)}.
  \label{eq:designated-complex-preparation}
\end{align}
For correctness substitutions, write $U_{\mathrm{fwd}}^{\mathbb C}$ for any exact complex preparation satisfying
\begin{equation}
  U_{\mathrm{fwd}}^{\mathbb C}\ket0^{\otimes n} =\ket{\psi^{\mathbb C}(\boldsymbol\theta)}.
  \label{eq:complex-forward-preparation-contract}
\end{equation}
This is the complex instance of the state-column principle in Sec.~\ref{subsec:circuit-contracts}. The designated assigned-work schedule is the direct-angle preparation $\Dph\Uchk$; the native $\HopfComplex$ circuit is another direct-angle state-column completion. Appendix~\ref{app:four-qubit-native} recalls the native gate-level realization, including the complex two-leaf completion.

Magnitude differentiation keeps the phase layer fixed. The resulting normalized direction is the phase-dressed real complement,
\begin{equation*}
  \ket{e_j^{\mathbb C}(\boldsymbol\theta)} := \Dph(\boldsymbol\theta_{\mathrm{ph}}) \ket{e_j^{\mathbb R}(\boldsymbol\theta_{\mathrm{mag}})}.
\end{equation*}
The complex magnitude metric equals the real tree metric,
\begin{equation}
  g^{\mathbb C}_{j,j}(\boldsymbol\theta) = g^{\mathbb R}_{j,j}(\boldsymbol\theta_{\mathrm{mag}}), \qquad 1\leq j<N.
  \label{eq:complex-real-magnitude-metric}
\end{equation}
Therefore,
\begin{align}
  \partial_{\theta_j}\ket{\psi^{\mathbb C}(\boldsymbol\theta)} &= \sqrt{g^{\mathbb C}_{j,j}(\boldsymbol\theta)} \ket{e_j^{\mathbb C}(\boldsymbol\theta)},\nonumber\\
  \partial_{\theta_j}\EO &= 2\sqrt{g^{\mathbb C}_{j,j}(\boldsymbol\theta)} \Real\bra{e_j^{\mathbb C}}O\ket{\psi^{\mathbb C}}, \qquad 1\leq j<N.
  \label{eq:complex-magnitude-derivative}
\end{align}
The complete complex magnitude frame is
\begin{align*}
  W_{\mathbb C}(\boldsymbol\theta) &:= \Dph(\boldsymbol\theta_{\mathrm{ph}}) \WR(\boldsymbol\theta_{\mathrm{mag}}),\\
  W_{\mathbb C}^{\dagger}(\boldsymbol\theta) &= \WR^\dagger(\boldsymbol\theta_{\mathrm{mag}}) \Dph^\dagger(\boldsymbol\theta_{\mathrm{ph}}).
\end{align*}
Its columns satisfy
\begin{equation*}
  W_{\mathbb C}(\boldsymbol\theta)\ket0^{\otimes n} =\ket{\psi^{\mathbb C}(\boldsymbol\theta)}, \qquad W_{\mathbb C}(\boldsymbol\theta)\ket{b_{\lambda(j)}} =\ket{e_j^{\mathbb C}(\boldsymbol\theta)}.
\end{equation*}
Common parameter arguments may now be suppressed. The displayed definition gives the separated inverse frame. A compiler may absorb phase dressing into addressed gates; Appendix~\ref{app:four-qubit-complex-frame} gives the four-qubit realization.

\noindent\textbf{Phase-metric convention.}
For $p_\ell:=|x_\ell|^2$ and
$\boldsymbol p=(p_0,\ldots,p_{N-1})^{\mathsf T}$, the complex chart uses the
ambient round-sphere metric of Ref.~\cite{lin2026hopf},
\begin{equation*}
  G_{\mathrm{ph}}^{\mathrm{sph}}=\operatorname{diag}(\boldsymbol p).
\end{equation*}
Quotienting global phase would instead give the unused projective comparison $G_{\mathrm{ph}}^{\mathrm{FS}}=\operatorname{diag}(\boldsymbol p) -\boldsymbol p\boldsymbol p^{\mathsf T}$. Uniform phase is not metric-null in the adopted geometry, although expectation objectives are invariant along it.

The complete complex gradient uses two templates. The magnitude template reuses each joint all-$X$ outcome across the $N-1$ magnitude coordinates. The phase template places each ancilla-$Y$/system-$Z$ outcome in a signed one-hot histogram for the $N$ leaf phases. They share state preparation and $\ctrl(O)$ but run as separate circuit families with different reverse blocks and readouts.

\subsection{Complex magnitude circuit and parity reuse}

\medskip
\noindent\textbf{One magnitude circuit template.}
At a fixed parameter point, prepare $\ket+_a\ket0^{\otimes n}$ and apply
\begin{equation*}
  U_{\mathrm{fwd}}^{\mathbb C} \longrightarrow \ctrl(O) \longrightarrow W_{\mathbb C}^{\dagger},
\end{equation*}
followed by $X$-basis measurement of every qubit. The accounting of assigned Hopf CNOT work for the designated separated implementation uses the block order
\begin{equation*}
  \Uchk \longrightarrow \Dph \longrightarrow \ctrl(O) \longrightarrow \Dph^\dagger \longrightarrow \WR^\dagger.
\end{equation*}
The state-column identity in Eq.~\eqref{eq:designated-complex-preparation} also permits the native $\HopfComplex$ circuit as the forward block. Both forward choices belong to the same complex-magnitude measurement family and retain direct coordinate-to-angle control; the theorem's finite constants refer to the separated schedule.

One execution returns the joint $(n+1)$-bit outcome $(b_m,\boldsymbol y)$ used by the real global-frame method. For every magnitude coordinate, define
\begin{equation}
  Z_j^{\mathbb C,\mathrm{mag}}(b_m,\boldsymbol y) := 2\sqrt{g^{\mathbb C}_{j,j}} (-1)^{b_m+\lambda(j)\cdot\boldsymbol y}, \qquad 1\leq j<N.
  \label{eq:complex-magnitude-estimator}
\end{equation}
Let $S_{\mathrm{mag}}^{\mathbb C}$ be the number of independent magnitude executions. Their empirical average is
\begin{equation}
  \widehat{\partial_{\theta_j}\EO} = \frac{1}{S_{\mathrm{mag}}^{\mathbb C}} \sum_{t=1}^{S_{\mathrm{mag}}^{\mathbb C}} Z_j^{\mathbb C,\mathrm{mag}}(b_{m,t},\boldsymbol y_t).
  \label{eq:complex-magnitude-direct-average}
\end{equation}
The same outcomes serve every $j$; only the marker and known metric factor change. Because $g^{\mathbb C}_{j,j}=g^{\mathbb R}_{j,j}$, the signed histogram and Walsh transform in Eqs.~\eqref{eq:real-global-joint-counts}--\eqref{eq:weight-rescaling} apply without modification.

\subsection{Direct phase circuit and signed leaf histogram}

Each phase derivative is already supported on one computational-basis leaf and needs no inverse frame:
\begin{equation}
  \partial_{\theta_{N+\ell}}\ket{\psi^{\mathbb C}(\boldsymbol\theta)} = \iu x_\ell(\boldsymbol\theta)\ket{b_\ell}, \qquad 0\leq\ell<N.
  \label{eq:complex-phase-raw-tangent}
\end{equation}
Computational-basis readout is therefore the identity differential resolution in Definition~\ref{def:state-coordinate-qbp}. The corresponding objective derivative is
\begin{equation*}
  \partial_{\theta_{N+\ell}}\EO = 2\Imag\bra{\psi^{\mathbb C}}\Pi_\ell O\ket{\psi^{\mathbb C}}, \qquad \Pi_\ell:=\ket{b_\ell}\!\bra{b_\ell}.
\end{equation*}

\medskip
\noindent\textbf{One phase circuit template.}
Prepare $\ket+_a\ket0^{\otimes n}$ with $\HopfComplex$, apply $\ctrl(O)$,
then measure the ancilla in $Y$ and the system in $Z$.  The state-column
contract permits any exact $U_{\mathrm{fwd}}^{\mathbb C}$ satisfying
Eq.~\eqref{eq:complex-forward-preparation-contract} for correctness; the
assigned-work comparison retains the designated direct-angle schedule.  One
execution returns
$(b_p,\widehat\ell)$, where $b_p\in\{0,1\}$ and
$\widehat\ell=(z_n\cdots z_1)_2\in\{0,\ldots,N-1\}$.

\medskip
\noindent\textbf{Single-execution record and direct average.}
For each leaf,
\begin{equation}
  Z_{N+\ell}^{\mathbb C,\mathrm{ph}}(b_p,\widehat\ell) := 2(-1)^{b_p}\indicator{\widehat\ell=\ell}.
  \label{eq:complex-phase-estimator}
\end{equation}
Here and below, $\mathbf e_r$ denotes a classical standard basis vector; it is distinct from the normalized quantum tangent kets $\ket{e_j^{\mathbb R}}$ and $\ket{e_j^{\mathbb C}}$. Equivalently, with $\mathbf e_{\widehat\ell}$ the standard basis vector of $\mathbb R^N$, the complete phase record is
\begin{equation}
  Z^{\mathbb C,\mathrm{ph}}(b_p,\widehat\ell) = 2(-1)^{b_p}\mathbf e_{\widehat\ell}, \qquad \|Z^{\mathbb C,\mathrm{ph}}\|_2=2.
  \label{eq:complex-phase-onehot-vector}
\end{equation}
Let $S_{\mathrm{ph}}^{\mathbb C}$ be the number of independent phase outcomes. Their empirical average is
\begin{align*}
  \widehat{\partial_{\theta_{N+\ell}}\EO} &= \frac{2}{S_{\mathrm{ph}}^{\mathbb C}} \sum_{t=1}^{S_{\mathrm{ph}}^{\mathbb C}} (-1)^{b_{p,t}}\indicator{\widehat\ell_t=\ell}\\
  &= \frac{2}{S_{\mathrm{ph}}^{\mathbb C}} \left(C_{0,\ell}^{\mathrm{ph}}-C_{1,\ell}^{\mathrm{ph}}\right),
\end{align*}
where $C_{b,\ell}^{\mathrm{ph}} :=\#\{t:(b_{p,t},\widehat\ell_t)=(b,\ell)\}$. Each outcome updates one leaf bin, and the complete signed histogram returns all $N$ phase derivatives. Its classical decoding uses $\bigO(S_{\mathrm{ph}}^{\mathbb C}+N)$ operations and $\bigO(N)$ slots.

The record naturally retains all $N$ phase coordinates in the order of Ref.~\cite{lin2026hopf}. A uniform leaf-phase shift leaves the expectation objective unchanged, so the exact phase-gradient block is zero-sum. At a zero-probability leaf, the corresponding phase differential and exact derivative vanish; the direct estimator remains defined without division by the leaf amplitude.

Protocol~\ref{prot:global-frame-gradient} summarizes the single-execution estimators. Each magnitude outcome serves all depths, and the phase stream completes the complex gradient.

\begin{protocolcard}{Global-frame method: \\one execution serves all depths}
  \label{prot:global-frame-gradient}
  \protocolstep{ONE MAGNITUDE EXECUTION}
  Prepare the real or complex Hopf state, apply $\ctrl(O)$, and apply the
  chart-specific inverse frame $\WR^\dagger$ or $W_{\mathbb C}^\dagger$.
  Measure the ancilla and system in $X$, obtaining $(b_m,\boldsymbol y)$.

  \protocolstep{WHAT THAT EXECUTION MEANS}
  For every internal node $j$, the same outcome enters the real or complex
  record in Eq.~\eqref{eq:real-hopf-estimator} or
  Eq.~\eqref{eq:complex-magnitude-estimator}.  Its ancilla and Walsh signs,
  together with the known metric factor, form one simultaneous sample of the
  complete magnitude-gradient block.

  \protocolstep{AFTER MANY EXECUTIONS}
  The signed histogram of $\boldsymbol y$ contains all required parity moments.
  One fast Walsh--Hadamard transform evaluates them together; metric rescaling
  then returns the complete magnitude block.

  \protocolstep{COMPLEX PHASES}
  For the complex chart, a second family of executions measures the ancilla in
  $Y$ and the system in $Z$.  Each outcome names one leaf and contributes a
  signed one-hot sample to the phase-gradient block.

  \protocolstep{RETURN}
  The real chart returns the magnitude block.  The complex chart concatenates
  the magnitude and phase blocks.
\end{protocolcard}

\subsection{Complex guarantee and cost at a glance}

\medskip
\noindent\textbf{Guarantee.}
The magnitude and phase records satisfy
\begin{align}
  \E Z_j^{\mathbb C,\mathrm{mag}} &=\partial_{\theta_j}\EO, & |Z_j^{\mathbb C,\mathrm{mag}}| &\leq2\sqrt{g^{\mathbb C}_{j,j}}\leq2,
  \label{eq:complex-magnitude-estimator-bound}\\
  \E Z_{N+\ell}^{\mathbb C,\mathrm{ph}} &=\partial_{\theta_{N+\ell}}\EO.
  \label{eq:complex-phase-estimator-properties}
\end{align}
Every complex magnitude depth has the norm-two property in Eq.~\eqref{eq:global-depth-fixed-norm}, and every phase record has norm two by Eq.~\eqref{eq:complex-phase-onehot-vector}. With the allocations $S_{\mathrm{mag}}^{\mathbb C}$ and $S_{\mathrm{ph}}^{\mathbb C}$ in Eq.~\eqref{eq:complete-chart-shot-allocations}, the all-$X$ magnitude and separate $Y/Z$ phase templates recover the complete complex coordinate vector. Appendix~\ref{app:proof-complex} proves phase dressing, magnitude parity, the phase-quadrature sign, the one-hot identity, and the compiler global-phase identity. The two records supply the complex case of Theorem~\ref{thm:hopf-quantum-backpropagation}(A).

\medskip
\noindent\textbf{Cost at a glance.}
For the complete complex Hopf chart, $M=2N-1$, and the gradient uses separate
magnitude and phase templates.  Their total global execution
count is
$S_{\mathrm{glob}}^{\mathbb C}:=S_{\mathrm{mag}}^{\mathbb C}+S_{\mathrm{ph}}^{\mathbb C}$.
Assigned Hopf CNOT work sums the selected forward and reverse block charges over
these executions; the controlled reflection is reported as the matched
application component.  Both streams accumulate in length-$N$ histograms.
Classical decoding costs
$\bigO(S_{\mathrm{glob}}^{\mathbb C}+N\log N)$, auxiliary storage is
$\bigO(N)$, and the output has $2N-1$ entries.  At fixed accuracy and confidence,
$S_{\mathrm{glob}}^{\mathbb C}=\bigO(\log\log M)$.

The global-frame method now returns the complete real or complex Hopf gradient, with one magnitude stream shared across every depth. Checkpointing uses the same differential structure at a different resolution: it replaces the complete inverse frame by a layer-local reverse suffix when only selected depths are needed.

\section{Checkpoint method: layer-local reverse gradients}
\label{sec:checkpoint-adjoint}

\subsection{From a complete frame to one active depth}

The checkpoint method returns the reflected response to one selected tree layer. The forward prefix creates the state at depth $d$; after the controlled reflection, the inverse suffix returns the reflected branch to that layer. The measured prefix then identifies the active coordinate. This is the checkpointed adjoint mechanism.

For depth $d$, factor the depth completion as
\begin{align*}
  A_d&:=U_d\cdots U_1U_0,\\
  B_d&:=U_{n-1}\cdots U_{d+1}, \qquad \Uchk=B_dA_d,
\end{align*}
and define the checkpoint state
\begin{equation*}
  \ket{\alpha_d}:=A_d\ket0^{\otimes n}.
\end{equation*}
The real reflected branch after suffix reversal is
\begin{equation*}
  \ket{\beta_d^{\mathbb R}} :=B_d^\dagger O\ket{\psi^{\mathbb R}}.
\end{equation*}
For $j=2^d+r$, differentiation at the selected depth gives
\begin{equation}
  \partial_{\theta_j}\EO = 2\Imag\bra{\beta_d^{\mathbb R}}Q_{d,r}\ket{\alpha_d}.
  \label{eq:checkpoint-adjoint-identity}
\end{equation}
Thus $B_d^\dagger$ is the required real reverse block.

\subsection{Real checkpoint template and decoder}

\medskip
\noindent\textbf{One depth-specific circuit template.}
For a selected depth $d$, prepare $\ket+_a\ket0^{\otimes n}$ and apply
\begin{equation*}
  \HopfReal(\boldsymbol\theta_{\mathrm{mag}}) \longrightarrow \ctrl(O) \longrightarrow B_d^\dagger(\boldsymbol\theta_{\mathrm{mag}}).
\end{equation*}
The state-column contract permits any exact $U_{\mathrm{fwd}}^{\mathbb R}$ satisfying Eq.~\eqref{eq:forward-preparation-contract} for correctness. The assigned-work comparison fixes $\Uchk$ as the designated direct-angle forward for this checkpoint program and its matched scalar baseline. After suffix reversal, the two branches are
\begin{equation}
  \ket{\Omega_d^{\mathbb R}} = \frac{ \ket0_a\ket{\alpha_d} + \ket1_a\ket{\beta_d^{\mathbb R}} }{\sqrt2}.
  \label{eq:checkpoint-interference-state}
\end{equation}
The $Y$-quadrature identity in Appendix~\ref{app:generic-reverse} converts their interference into the transported derivative in Eq.~\eqref{eq:checkpoint-adjoint-identity}.

\medskip
\noindent\textbf{One joint outcome.}
Measure the branch ancilla in $Y$, the $d$ prefix qubits in $Z$, and the target
$q_{n-d}$ in $Y$.  The record uses the ancilla, prefix, and target; the lower
suffix need not be measured.  One execution returns
$(b_c,b_t,\widehat r)$, where
$\widehat r=(z_n\cdots z_{n-d+1})_2\in\{0,\ldots,2^d-1\}$,
with $\widehat r=0$ at depth zero.

\medskip
\noindent\textbf{Single-execution record and direct average.}
The outcome defines
\begin{equation}
  Z_d^{\mathrm{chk}}(b_c,b_t,\widehat r) := -2(-1)^{b_c+b_t}\mathbf e_{\widehat r} \in\mathbb R^{2^d},
  \label{eq:checkpoint-onehot-vector}
\end{equation}
where $\mathbf e_{\widehat r}$ is the standard basis vector indexed by the measured prefix. Let $S_d$ be the number of records collected at depth $d$. Their average gives $\widehat\mu_d^{\mathrm{mag}}=S_d^{-1}\sum_{s=1}^{S_d}Z_{d,s}^{\mathrm{chk}}$. Componentwise,
\begin{equation}
  \widehat{\partial_{\theta_{2^d+r}}\EO} = -\frac{2}{S_d} \sum_{s=1}^{S_d} (-1)^{b_{c,s}+b_{t,s}} \indicator{\widehat r_s=r}.
  \label{eq:checkpoint-coordinate-average}
\end{equation}
Each outcome updates one of the $2^d$ prefix bins. The same stream therefore recovers every coordinate at depth $d$. Changing the depth changes the suffix, readout target, and associated circuit template.

\subsection{Complex active-interface extension}
\label{subsec:checkpoint-state}

Before the suffix is applied, the reached interface consists of states whose lower $n-d-1$ qubits remain in $\ket0$. Its projector is
\begin{equation*}
  P_d := I_{q_n\cdots q_{n-d}} \otimes (\ket0\!\bra0)^{\otimes(n-d-1)}.
\end{equation*}
Both $\ket{\alpha_d}$ and $Q_{d,r}\ket{\alpha_d}$ lie in this subspace. A complex-compatible suffix therefore needs to agree with the separated phase-dressed suffix only on these reached columns:
\begin{equation*}
  B_{d,\mathbb C}P_d = \Dph B_dP_d.
\end{equation*}
This active-interface identity fixes exactly the action used by the checkpoint record. The unused unitary completion remains free. The separated choice $B_{d,\mathbb C}=\Dph B_d$ satisfies the same relation on the full space. Define the transported complex response by
\begin{equation*}
  \ket{\beta_d^{\mathbb C}} :=B_{d,\mathbb C}^\dagger O\ket{\psi^{\mathbb C}}.
\end{equation*}

\medskip
\noindent\textbf{One complex depth-$d$ template.}
An integrated realization applies $\HopfComplex$, $\ctrl(O)$, and
$B_{d,\mathbb C}^\dagger$, followed by the real checkpoint's $Y/Z/Y$
measurement and signed one-hot record.  The separated realization is
\begin{equation}
  \Uchk \longrightarrow \Dph \longrightarrow \ctrl(O) \longrightarrow \Dph^\dagger \longrightarrow B_d^\dagger.
  \label{eq:complex-checkpoint-circuit-order}
\end{equation}
The interface identity preserves the measured correlators and the average in Eq.~\eqref{eq:checkpoint-coordinate-average}; integrated and separated completions may act differently on the unused columns. At the final depth, $B_{n-1}=I$ and $P_{n-1}=I$. The real reverse block is the identity, and the complex reverse block is $\Dph^\dagger$. These templates supply the complex magnitude block. The direct phase template of Sec.~\ref{sec:complex-global} supplies the complementary phase-coordinate block.

Protocol~\ref{prot:checkpoint-gradient} gives the layer-local estimator one execution at a time. Different depths use separate streams because each has its own reverse suffix and measured target.

\begin{protocolcard}{Checkpoint method: \\one execution serves one depth}
  \label{prot:checkpoint-gradient}
  \protocolstep{CHOOSE A DEPTH}
  Fix $d$.  The checkpointed adjoint reverses the reflected branch only through
  the suffix below that layer.

  \protocolstep{ONE CHECKPOINT EXECUTION}
  Prepare the Hopf state, apply $\ctrl(O)$, and apply the chart-specific reverse
  block $B_d^\dagger$ or $B_{d,\mathbb C}^\dagger$.  Measure the ancilla and the
  depth-$d$ target in $Y$ and the active prefix in $Z$, obtaining
  $(b_c,b_t,\widehat r)$.

  \protocolstep{WHAT THAT EXECUTION MEANS}
  The prefix $\widehat r$ selects one coordinate at depth $d$, while the two
  $Y$ outcomes supply its sign.  The signed one-hot contribution
  $-2(-1)^{b_c+b_t}\mathbf e_{\widehat r}$ samples the complete depth-$d$
  gradient block.

  \protocolstep{AFTER MANY EXECUTIONS}
  Average the records to recover all $2^d$ derivatives at that depth.  Requested
  depths use separate streams because their reverse suffixes and targets differ.

  \protocolstep{COMPLEX COMPLETION}
  For a complete complex gradient, append the direct phase stream from the
  global-frame construction.
\end{protocolcard}

\subsection{Checkpoint guarantee and cost at a glance}

\medskip
\noindent\textbf{Guarantee.}
For either magnitude construction, with $\mu_d^{\mathrm{mag}}$ interpreted in
its corresponding chart,
\begin{equation}
  \E Z_d^{\mathrm{chk}}=\mu_d^{\mathrm{mag}},
  \label{eq:checkpoint-depth-gradient}
\end{equation}
and every record satisfies
\begin{equation}
  \|Z_d^{\mathrm{chk}}\|_2=2.
  \label{eq:checkpoint-fixed-norm}
\end{equation}
The checkpoint and global records therefore have the same statistical scale at a fixed depth. For one selected depth, the common count in Eq.~\eqref{eq:master-fixed-norm-shot-count} with failure probability $\eta=\delta$ suffices. Equation~\eqref{eq:complete-chart-shot-allocations} gives the complete-coordinate allocations. A complete real gradient uses one stream for each of the $n$ magnitude depths; a complete complex gradient adds the direct phase stream. Appendix~\ref{app:proof-real-checkpoint} proves the transported derivative, $Y$-quadrature sign, one-hot readout, interface substitution, and within-depth commutation. Appendix~\ref{app:four-qubit-checkpoints} gives the real and complex four-qubit depth-$2$ circuits. One selected depth is a block reverse pass; the complete all-depth schedule supplies Theorem~\ref{thm:hopf-quantum-backpropagation}(B).

\medskip
\noindent\textbf{Cost at a glance.}
A selected depth uses one template, $S_d$ repetitions, and a signed
$2^d$-bin histogram.  For $q$ selected depths, the checkpoint method sums the
repeated charges of $q$ templates.  A complete real schedule streams all
histograms in $\bigO(\sum_{d=0}^{n-1}S_d+N)$ classical operations and uses
$\bigO(N)$ auxiliary storage, producing $N-1$ entries because
$\sum_d2^d=N-1$.  At fixed accuracy and confidence, the complete real schedule
uses $\bigO(\log M\log\log M)$ executions with $M=N-1$; adding the direct
complex-phase histogram gives the same order with $M=2N-1$.
\section{Backpropagation accuracy, resources, and protocol choice}
\label{sec:resource-analysis}

\subsection{Reverse pullback and shared records}
\label{subsec:reverse-pullback}

The controlled reflection creates one objective response on state space. The inverse Hopf frame resolves that response along normalized magnitude directions, and the metric factors convert those frame components into coordinate derivatives. Writing $W_{\mathbb R}:=\WR$ and using the complex frame $W_{\mathbb C}$ from Sec.~\ref{sec:complex-global}, the component form of this pullback is, for $\mathbb K\in\{\mathbb R,\mathbb C\}$ and $1\leq j<N$,
\begin{equation}
  \partial_{\theta_j}\EO = 2\sqrt{g^{\mathbb K}_{j,j}} \Real\bra{b_{\lambda(j)}} W_{\mathbb K}^{\dagger}O \ket{\psi^{\mathbb K}}.
  \label{eq:hopf-coordinate-pullback}
\end{equation}
Controlled interference makes all addressed components in Eq.~\eqref{eq:hopf-coordinate-pullback} jointly readable. The checkpoint identity in Eq.~\eqref{eq:checkpoint-adjoint-identity}, together with the complex active-interface construction of Sec.~\ref{subsec:checkpoint-state}, implements the same coordinate pullback after transport to one selected depth. Complex phase directions are already localized in the computational basis, so their differential resolution is the identity. The global, checkpoint, and direct phase records therefore meet the operational record condition in Definition~\ref{def:state-coordinate-qbp}.

The global method reuses outcomes across all $n$ magnitude-depth records. Checkpointing instead uses a separate stream for each requested layer. In the complex chart, both constructions add a direct phase stream.

\subsection{Fixed-norm records and complete-chart accuracy}
\label{subsec:statistical-tools}

The width of a tree depth does not set its statistical scale. Every global magnitude-depth vector has norm two by Eq.~\eqref{eq:global-depth-fixed-norm}, and one global outcome supplies all of these depth records. Checkpoint and direct phase records are also norm-two signed one-hot vectors by Eqs.~\eqref{eq:checkpoint-fixed-norm} and \eqref{eq:complex-phase-onehot-vector}. The same concentration statement therefore controls every block, regardless of its number of coordinates.

\medskip
\noindent\textbf{Fixed-norm concentration.}
Let $Z_1,\ldots,Z_S\in\mathbb R^p$ be independent, identically distributed
vectors with $\|Z_t\|_2\leq B$ almost surely.  Set
$\mu:=\E Z_t$ and $\widehat\mu:=S^{-1}\sum_tZ_t$.  For every
$\eta\in(0,1)$,
\begin{equation}
  \Prob\!\left[ \|\widehat\mu-\mu\|_\infty > \frac{B\{1+\sqrt{2\log(1/\eta)}\}}{\sqrt S} \right] \leq\eta.
  \label{eq:fixed-norm-tail}
\end{equation}
For the norm-two Hopf records, define the common sufficient count
\begin{equation}
  S_\star(\eta) := \left\lceil \frac{4\{1+\sqrt{2\log(1/\eta)}\}^2}{\varepsilon^2} \right\rceil.
  \label{eq:master-fixed-norm-shot-count}
\end{equation}
Applying Eq.~\eqref{eq:fixed-norm-tail} to each complete depth vector and combining the depth-level failure events gives a simultaneous coordinate guarantee. This union bound does not require the depth records decoded from one outcome to be independent. Allocating the total failure probability across the required blocks gives
\begin{align}
  S_{\mathrm{mag}}^{\mathbb R} &:=S_\star(\delta/n), & S_{\mathrm{mag}}^{\mathbb C} &:=S_\star(\delta/(2n)),\nonumber\\
  S_{\mathrm{ph}}^{\mathbb C} &:=S_\star(\delta/2), & S_{\mathrm{glob}}^{\mathbb C} &:=S_{\mathrm{mag}}^{\mathbb C}+S_{\mathrm{ph}}^{\mathbb C},\nonumber\\
  S_{\mathrm{glob}}^{\mathbb R} &:=S_{\mathrm{mag}}^{\mathbb R}, & S_{\mathrm{chk}}^{\mathbb R} &:=nS_{\mathrm{mag}}^{\mathbb R},\nonumber\\
  && S_{\mathrm{chk}}^{\mathbb C} &:=nS_{\mathrm{mag}}^{\mathbb C}+S_{\mathrm{ph}}^{\mathbb C}.
  \label{eq:complete-chart-shot-allocations}
\end{align}
Consequently, for either chart $\mathbb K\in\{\mathbb R,\mathbb C\}$,
\begin{align}
  S_{\mathrm{glob}}^{\mathbb K} &= \bigO\!\left( \frac{1+\log(n/\delta)}{\varepsilon^2} \right),\nonumber\\
  S_{\mathrm{chk}}^{\mathbb K} &= \bigO\!\left( \frac{n\{1+\log(n/\delta)\}}{\varepsilon^2} \right).
  \label{eq:global-checkpoint-shot-scaling}
\end{align}
At fixed absolute coordinate accuracy and confidence, these become $\bigO(\log\log M)$ and $\bigO(\log M\log\log M)$, respectively. These are sufficient counts for the displayed allocations, not lower bounds on the number of executions. Per-execution work and the $M$-entry classical output are accounted for below. Appendix~\ref{app:proof-resource-criterion} proves the concentration bound, derives both orders explicitly, and records the corresponding complete-vector $\ell_2$ guarantee.

\subsection{Assigned Hopf CNOT work}
\label{subsec:assigned-ledger}

The resource comparison uses the direct-angle Hopf realization inherited from Ref.~\cite{lin2026hopf}. Each magnitude coordinate remains the physical rotation angle of its designated tree split, and each complex leaf phase remains a directly programmed phase angle. Keeping this coordinate-to-control map explicit makes the chart, inverse, metric, tangent frame, and counted physical controls part of one transparent geometry. The same cost functional is used for scalar and gradient programs. Assigned Hopf CNOT work is the assigned CNOT charge of Hopf-specific circuit blocks. For such a block $U$, let $C_{\mathrm H}(U)$ denote its assigned CNOT count. For a program $\mathcal P$, let $\mathsf W_{\mathrm H}^{\mathbb K}(\mathcal P;\varepsilon,\delta)$ denote the total charge of its Hopf forward and reverse blocks over all executions required for $(\varepsilon,\delta)$ accuracy.

The controlled reflection, its compiler workspace, final basis changes, measurement, and classical processing are reported as separate application and classical resources. Each matched scalar or gradient execution contains one call to the same $\ctrl(O)$. Adding the same nonnegative application-dependent controlled-reflection charge to every scalar and gradient execution preserves the overhead orders below up to a constant factor. The logarithmic matched-resource benchmark is therefore evaluated using assigned Hopf CNOT work and the matched storage comparison below. The explicit controlled two-level Hopf-gate charges assume all-to-all connectivity, zero clean work ancillas, exact one-qubit primitives, and the decompositions stated in Appendix~\ref{app:proof-compiler-ledger}.

For the designated schedules used in the theorem, set
\begin{align*}
  U_{\mathrm{fwd}}^{\mathbb R}&:=\Uchk, & U_{\mathrm{fwd}}^{\mathbb C}&:=\Dph\Uchk,\\
  c_{\mathrm{fwd}}^{\mathbb K} &:=C_{\mathrm H}(U_{\mathrm{fwd}}^{\mathbb K}).
\end{align*}
The matched scalar circuit uses this same forward preparation and the same $\ctrl(O)$ access. It omits the Hopf reverse block and measures the branch ancilla in $X$ to estimate $\EO$. Hoeffding's inequality~\cite{hoeffding1963probability} gives
\begin{equation}
  S_F(\varepsilon,\delta) = \left\lceil \frac{2}{\varepsilon^2}\log\frac{2}{\delta} \right\rceil,
  \label{eq:scalar-shot-count}
\end{equation}
and the corresponding assigned Hopf CNOT work is
\begin{equation}
  \mathsf W_{\mathrm H}^{\mathbb K}(\mathrm{scalar}) = S_F\,c_{\mathrm{fwd}}^{\mathbb K}.
  \label{eq:matched-scalar-hopf-work}
\end{equation}
The arguments $(\varepsilon,\delta)$ are suppressed in the remaining work formulas.

Let
\begin{align*}
  c_{\mathrm{fr}}^{\mathbb R} &:=C_{\mathrm H}(\WR^\dagger), & c_{\mathrm{fr}}^{\mathbb C} &:=C_{\mathrm H}(W_{\mathbb C}^\dagger),\\
  c_d^{\mathbb R} &:=C_{\mathrm H}(B_d^\dagger), & c_d^{\mathbb C} &:=C_{\mathrm H}(B_{d,\mathbb C}^\dagger).
\end{align*}
The repeated Hopf-block charges of the displayed schedules are
\begin{align}
  \mathsf W_{\mathrm H}^{\mathbb R}(\mathrm{global}) &= S_{\mathrm{mag}}^{\mathbb R} \left(c_{\mathrm{fwd}}^{\mathbb R}+c_{\mathrm{fr}}^{\mathbb R}\right), \nonumber\\
  \mathsf W_{\mathrm H}^{\mathbb C}(\mathrm{global}) &= S_{\mathrm{mag}}^{\mathbb C} \left(c_{\mathrm{fwd}}^{\mathbb C}+c_{\mathrm{fr}}^{\mathbb C}\right) + S_{\mathrm{ph}}^{\mathbb C}c_{\mathrm{fwd}}^{\mathbb C}, \nonumber\\
  \mathsf W_{\mathrm H}^{\mathbb R}(\mathrm{checkpoint}) &= \sum_{d=0}^{n-1} S_{\mathrm{mag}}^{\mathbb R} \left(c_{\mathrm{fwd}}^{\mathbb R}+c_d^{\mathbb R}\right), \nonumber\\
  \mathsf W_{\mathrm H}^{\mathbb C}(\mathrm{checkpoint}) &= \sum_{d=0}^{n-1} S_{\mathrm{mag}}^{\mathbb C} \left(c_{\mathrm{fwd}}^{\mathbb C}+c_d^{\mathbb C}\right) + S_{\mathrm{ph}}^{\mathbb C}c_{\mathrm{fwd}}^{\mathbb C}.
  \label{eq:hopf-work-block-formulas}
\end{align}
The complex magnitude and phase families therefore retain their separate execution counts and Hopf charges.

For the designated schedules,
\begin{equation}
  c_{\mathrm{fr}}^{\mathbb K} \leq2c_{\mathrm{fwd}}^{\mathbb K}, \qquad c_d^{\mathbb K} \leq c_{\mathrm{fwd}}^{\mathbb K} \quad(0\leq d<n).
  \label{eq:hopf-reverse-forward-comparability}
\end{equation}
The real frame bound follows because the final depth of $\Uchk$ already contains half as many gates with the same largest control count as $\WR$. Every inverse suffix is a subset of the depth layers of $\Uchk$. Adding the same diagonal phase layer to the designated complex forward and reverse blocks preserves the factors in Eq.~\eqref{eq:hopf-reverse-forward-comparability}. The exact sums appear in Appendix~\ref{app:proof-compiler-ledger}.

\medskip
\noindent\textbf{Matched storage.}
The designated scalar and gradient circuits use the same system register,
branch ancilla, forward-compiler workspace, and controlled-reflection
workspace.  Their blocks execute sequentially, so phase-layer and controlled-reflection
workspaces are reused.  Under the adopted Hopf-gate ledger, the addressed
controlled two-level gates use zero clean work ancillas.  A reverse schedule
adds streamed histograms and a complete gradient using $\bigO(M)$ classical
slots; the scalar input already contains the $M$ chart parameters.  Total
input/output-inclusive storage therefore grows by at most a constant factor
relative to the matched scalar program.

The record identities, norm-two concentration bound, and matched compiler comparison give the complete-gradient guarantee. One global magnitude stream serves all depths, whereas checkpoints use one stream per depth. A direct phase stream completes the complex chart.

\begin{theorem}[Quantum backpropagation and checkpointed reverse gradients
    with the Hopf ansatz]
  \label{thm:hopf-quantum-backpropagation}
  Let $\mathbb K\in\{\mathbb R,\mathbb C\}$, and let $M=N-1$ for $\mathbb K=\mathbb R$ and $M=2N-1$ for $\mathbb K=\mathbb C$. Under Assumption~\ref{ass:controlled-reflection}, the correctness and execution statements below hold for $n\geq1$. The assigned Hopf CNOT work and storage statements use the designated direct-angle separated schedules and assume $n\geq2$ to ensure a nonzero matched forward charge.

  \textbf{(A) Global-frame method.} The global magnitude pullback in Eq.~\eqref{eq:hopf-coordinate-pullback}, together with the direct phase family for $\mathbb K=\mathbb C$, implements the reverse state-coordinate task of Definition~\ref{def:state-coordinate-qbp}. With the allocations in Eq.~\eqref{eq:complete-chart-shot-allocations}, it produces an $(\varepsilon,\delta)$-accurate complete raw Hopf-coordinate gradient using
  \begin{equation*}
    S_{\mathrm{glob}}^{\mathbb K} = \bigO\!\left( \frac{1+\log(n/\delta)}{\varepsilon^2} \right)
  \end{equation*}
  independent executions. Under Eq.~\eqref{eq:hopf-reverse-forward-comparability}, the assigned Hopf CNOT work ratio relative to the matched scalar program satisfies
  \begin{equation}
    \frac{ \mathsf W_{\mathrm H}^{\mathbb K}(\mathrm{global}) }{ \mathsf W_{\mathrm H}^{\mathbb K}(\mathrm{scalar}) } = \bigO\!\left( \frac{1+\log(n/\delta)}{\log(2/\delta)} \right).
    \label{eq:hopf-global-work-ratio}
  \end{equation}

  \textbf{(B) Checkpoint method.} At each depth, the checkpointed adjoint implements the transported pullback in Eq.~\eqref{eq:checkpoint-adjoint-identity}, with the complex active-interface extension of Sec.~\ref{subsec:checkpoint-state}. A selected checkpoint therefore implements the reverse state-coordinate task of Definition~\ref{def:state-coordinate-qbp} for its depth block. With the allocations in Eq.~\eqref{eq:complete-chart-shot-allocations}, the complete all-depth schedule, together with the direct phase family for $\mathbb K=\mathbb C$, produces the same $(\varepsilon,\delta)$-accurate complete raw Hopf-coordinate gradient using
  \begin{equation*}
    S_{\mathrm{chk}}^{\mathbb K} = \bigO\!\left( \frac{n\{1+\log(n/\delta)\}}{\varepsilon^2} \right)
  \end{equation*}
  independent executions. Under Eq.~\eqref{eq:hopf-reverse-forward-comparability},
  \begin{equation}
    \frac{ \mathsf W_{\mathrm H}^{\mathbb K}(\mathrm{checkpoint}) }{ \mathsf W_{\mathrm H}^{\mathbb K}(\mathrm{scalar}) } = \bigO\!\left( \frac{n\{1+\log(n/\delta)\}}{\log(2/\delta)} \right).
    \label{eq:hopf-checkpoint-work-ratio}
  \end{equation}

  Under the designated direct-angle forward-charge condition $c_{\mathrm{fwd}}^{\mathbb K}=\bigO(M\log M)$, the same cost functional gives, at fixed accuracy and confidence,
  \begin{align}
    \mathsf W_{\mathrm H}^{\mathbb K}(\mathrm{scalar}) &=\bigO(M\log M),\nonumber\\
    \mathsf W_{\mathrm H}^{\mathbb K}(\mathrm{global}) &=\bigO(M\log M\log\log M),\nonumber\\
    \mathsf W_{\mathrm H}^{\mathbb K}(\mathrm{checkpoint}) &=\bigO(M\log^2 M\log\log M).
    \label{eq:hopf-absolute-work-scaling}
  \end{align}
  For the designated schedules, the matched-storage argument above gives $\bigO(1)$ total input/output-inclusive storage overhead. The additional classical histogram and output storage is $\bigO(M)$. Since $M=2^n-1$ in the real case and $M=2^{n+1}-1$ in the complex case, the preceding bounds have the stated dependence on $M$. Relative to the matched scalar program, the global-frame method has $\bigO(\log\log M)$ overhead in assigned Hopf CNOT work and therefore meets the logarithmic matched-resource benchmark. The displayed complete-checkpoint overhead is $\bigO(\log M\log\log M)$. Within the same resource specialization it remains polylogarithmic in the broader sense of Ref.~\cite{abbas2023quantum}, but the present bound does not meet the stricter logarithmic benchmark.
\end{theorem}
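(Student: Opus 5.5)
The proof assembles guarantees already established in the text: the record identities, the fixed-norm concentration bound, and the block-charge formulas. Three layers remain: correctness, execution counts, and work/storage ratios.

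\textbf{Correctness.} For (A), combine Eq.~\eqref{eq:hopf-coordinate-pullback} with the frame-column identities~\eqref{eq:frame-column-identities} and the parity identity~\eqref{eq:frame-parity-identity} at $k=\lambda(j)$. This gives $\E Z_j^{\mathbb K}=\partial_{\theta_j}\EO$ for the magnitude records, and the phase-quadrature identity gives~\eqref{eq:complex-phase-estimator-properties}. So each depth vector $Z_d^{\mathrm{fr}}$ and the phase one-hot vector are records in the sense of Definition~\ref{def:state-coordinate-qbp}. For (B), the checkpoint identity~\eqref{eq:checkpoint-adjoint-identity} together with the $Y$-quadrature readout gives~\eqref{eq:checkpoint-depth-gradient}. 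The complex case follows from the active-interface equality $B_{d,\mathbb C}P_d=\Dph B_dP_d$, since both $\ket{\alpha_d}$ and $Q_{d,r}\ket{\alpha_d}$ lie in the range of $P_d$.

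\textbf{Accuracy.} I apply the fixed-norm tail~\eqref{eq:fixed-norm-tail} with $B=2$, which holds by~\eqref{eq:global-depth-fixed-norm}, \eqref{eq:checkpoint-fixed-norm} and~\eqref{eq:complex-phase-onehot-vector}. Taking $S=S_\star(\eta)$ makes the deviation at most $\varepsilon$ with failure probability at most $\eta$ per block. For the global method, the $n$ depth blocks are decoded from the same outcomes, so I use a union bound rather than independence. This gives total failure $n\cdot\delta/n$ in the real case, and $2n\cdot\delta/(2n)+\delta/2\le\delta$ after adjusting constants in the complex case. The checkpoint schedule uses one independent stream per depth under the same union bound. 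Since $\{1+\sqrt{2\log(n/\delta)}\}^2\le 2+4\log(n/\delta)$, we get $S_\star(\delta/n)=\bigO((1+\log(n/\delta))/\varepsilon^2)$, and multiplying by $n$ gives the checkpoint count. With $n=\log_2(M+1)$ or $\log_2((M+1)/2)$, these become $\bigO(\log\log M)$ and $\bigO(\log M\log\log M)$ at fixed $\varepsilon,\delta$.

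\textbf{Work and storage.} Insert~\eqref{eq:hopf-reverse-forward-comparability} into~\eqref{eq:hopf-work-block-formulas}. The global work is at most $3(S_{\mathrm{mag}}+S_{\mathrm{ph}})c_{\mathrm{fwd}}$, and the checkpoint work is at most $2nS_{\mathrm{mag}}c_{\mathrm{fwd}}+S_{\mathrm{ph}}c_{\mathrm{fwd}}$. Dividing by~\eqref{eq:matched-scalar-hopf-work} cancels $c_{\mathrm{fwd}}$ and leaves the ratio $S/S_F$, which gives~\eqref{eq:hopf-global-work-ratio} and~\eqref{eq:hopf-checkpoint-work-ratio}. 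The cancellation needs $c_{\mathrm{fwd}}>0$, and this is why $n\ge2$ is assumed: it ensures at least one controlled split. Multiplying the ratios by the scalar work $S_F c_{\mathrm{fwd}}=\bigO(M\log M)$ yields~\eqref{eq:hopf-absolute-work-scaling}. Storage follows from the matched-storage paragraph: the quantum workspace is shared, the reverse blocks need zero clean ancillas, and the histograms and output take $\bigO(M)$ slots, which matches the $M$ input parameters.

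\textbf{Main obstacle.} The delicate step is justifying~\eqref{eq:hopf-reverse-forward-comparability} from the compiler ledger, in particular $c_{\mathrm{fr}}\le 2c_{\mathrm{fwd}}$. I would count the multi-controlled two-level gates layer by layer. $V_d$ has $2^d$ gates, each with $\bigO(n)$ controls. I would then bound the sum by the final layer of $\Uchk$, which has $N/2$ gates with the maximal control count. I would use this only through the stated appendix decompositions.
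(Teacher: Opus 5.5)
Your proposal is correct and follows the paper's own proof essentially step for step: unbiased records from the parity and $Y$-quadrature identities, the norm-two bounded-difference tail with a union bound over depth blocks rather than independence, substitution of Eq.~\eqref{eq:hopf-reverse-forward-comparability} into Eq.~\eqref{eq:hopf-work-block-formulas} with $c_{\mathrm{fwd}}^{\mathbb K}>0$ guaranteed by $n\geq2$, and the shared-workspace storage count. One arithmetic slip: the complex chart has $n$ magnitude depths, not $2n$, each at failure level $\delta/(2n)$, so the stated allocations give exactly $n\cdot\delta/(2n)+\delta/2=\delta$ with no constants to adjust, whereas your expression $2n\cdot\delta/(2n)+\delta/2$ would equal $3\delta/2$.
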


\noindent\textbf{Coordinate and geometric outputs.}
The theorem controls raw Hopf-coordinate derivatives in the norm of
Definition~\ref{def:complete-gradient}.  For an active real magnitude
coordinate, define the normalized-frame and inverse-metric outputs by
\begin{equation*}
  \chi_j:=\frac{\partial_{\theta_j}\EO}{\sqrt{g^{\mathbb R}_{j,j}}}, \qquad \nu_j:=\frac{\partial_{\theta_j}\EO}{g^{\mathbb R}_{j,j}}.
\end{equation*}
At a parameter point with $0<g^{\mathbb R}_{k,k}<(\varepsilon/2)^2$, choose a Hermitian unitary that swaps $\ket{\psi^{\mathbb R}}$ with $\ket{e_k^{\mathbb R}}$ and hold it fixed during differentiation. Then $\partial_{\theta_k}E_O=2\sqrt{g^{\mathbb R}_{k,k}}<\varepsilon$ but $\chi_k=2$. Thus raw coordinate and normalized-frame accuracy are distinct near small metric weights; Appendix~\ref{app:proof-resource-criterion} gives the witness and conditioning bounds.

The theorem uses the direct-angle ledger, in which each Hopf coordinate remains one directly programmed physical angle. Appendix~\ref{app:compiler-robustness} separately gives an exact multiplexed realization that recombines those angles while preserving the estimator and $\bigO(N)$ matched forward/frame scaling. This alternative has its own compiler ledger; it does not replace the theorem's accounting. Finite and proof details appear in Appendices~\ref{app:four-qubit-ledger}, \ref{app:proof-compiler-ledger}, and \ref{app:proof-hopf-qbp}.

\medskip
\noindent\textbf{Classical processing.}
These operation counts use a word-RAM whose words can store one $n$-bit
outcome label, so reading or incrementing one histogram address costs
$\bigO(1)$.  Global decoding streams an $N$-bin signed histogram and applies
one Walsh--Hadamard transform, requiring
$\bigO(S_{\mathrm{glob}}^{\mathbb K}+N\log N)$ operations.  Checkpoint and
direct phase records use signed histograms, requiring
$\bigO(S_{\mathrm{chk}}^{\mathbb K}+N)$ operations for a complete schedule.
A complete gradient contains $M$ entries, so linear output work is intrinsic
to the requested output.

\subsection{Choosing the reverse protocol}
\label{sec:comparison}

At a fixed depth, $Z_d^{\mathrm{fr}}$ and $Z_d^{\mathrm{chk}}$ are unbiased for $\mu_d^{\mathrm{mag}}$ and both have norm two. They differ in how many depths one template and record stream serves.

The global frame is the natural choice when many or all depths are requested and its inverse is available. Every all-$X$ outcome contributes to every depth; one signed histogram and Walsh transform decode the magnitude block. The complex chart adds the direct phase stream.

Checkpoints are useful when one or a few depths are active and their suffixes or interfaces are easier to compile. Each depth has its own reverse block, readout target, and $2^d$-bin histogram, whose average recovers the complete depth block. The final real depth needs no reverse suffix.

For $q$ requested depths $d_1,\ldots,d_q$, checkpoints use $q$ templates. They are favorable when their repeated compiled records cost less than the full-frame records or when a layer-local update is the intended output. Appendix~\ref{app:four-qubit-ledger} gives the finite four-qubit comparison. Both methods resolve the same normalized differential: globally in the addressed frame or locally at an active interface.

\section{Conclusion}
\label{sec:conclusion}

The Hopf tree carries an objective response backward as well as a state forward. Under phase-calibrated controlled-reflection access, the global construction maps the reflected response to computational addresses for all normalized magnitude directions. One shared record stream recovers the complete magnitude block, and a separate signed leaf record supplies the complex phase derivatives. The magnitude stream alone completes the real chart; the two streams together complete the complex chart. Checkpointing provides the layer-local alternative, reversing only the suffix below a chosen depth to recover that layer's derivative block.

The reverse construction uses the same direct-angle geometry as the Hopf representation. Each magnitude coordinate defines one nested amplitude split, is programmed as the physical angle of that split, and determines one differential direction. Each complex leaf phase is likewise a directly programmed control. The same recursion prepares the state; its classical inverse recovers coordinates from known amplitudes, and its diagonal metric supplies the corresponding weights. Representation, circuit control, and differentiation therefore remain in one language. The circuit structure that defines the model also supplies the directions in which its objective response is decoded.

At fixed additive coordinate accuracy and confidence, sufficient execution counts are $\bigO(\log\log M)$ for the global method and $\bigO(\log M\log\log M)$ for the complete checkpoint schedule. The global gain is record reuse across magnitude coordinates, not a smaller classical output. Under the designated direct-angle ledger, the global construction meets the logarithmic matched-resource benchmark relative to a scalar evaluation using the same forward preparation and controlled reflection. Checkpointing is complementary for a selected layer with a less costly reverse suffix. These guarantees concern raw coordinate derivatives: the $M$-entry output, classical decoding, and conditioning of relative or metric-rescaled accuracy remain separate costs and requirements.

Applications require the stated coherent objective interface or a suitable reflection-sum decomposition. Within that condition, complete coordinate updates may support variational scientific computing, pure-state learning, and probe or decision-state design, while checkpoint records may support layer-local quantum-control updates. These are uses of the gradient interface, not application-level performance guarantees. The data encoding, controlled-reflection compiler, aggregation rule, and optimizer must still be specified for each problem.

The principal open question is whether the coherent objective interface can be weakened without losing the shared reverse record. Alternative interference or measurement constructions may expose the same transition moments with different control requirements. A second direction is to adapt the Hopf organization to structured state families whose relevant degrees of freedom occupy only part of the full pure-state sphere. The two papers thus make the direct-angle Hopf geometry bidirectional: the compass maps coordinates to any pure state, and quantum backpropagation through the controlled-reflection mirror returns the objective response to them.

\begin{acknowledgments}
The authors are sincerely grateful to Allan Tosta, Andr'e J. Ferreira-Martins,
Thiago O. Maciel, and Tobias Haug for thoughtful discussions and perspectives
that helped shape this work, and to Leandro Aolita and the Technology Innovation
Institute (TII) for their valued support of the project. Special thanks are due to Giancarlo Camilo and Renato M. S.
Farias for sustained guidance, encouragement, and perspective.

The authors also thank Xin Wang for his valuable assistance with the final revision of the manuscript. The work was partially supported by the National Natural Science Foundation of China
(Grant No.~12447107), the Guangdong Provincial Quantum Science
Strategic Initiative (Grant No.~GDZX2403008, GDZX2503001), and the CCF-Tencent
Rhino-Bird Open Research Fund.
\end{acknowledgments}

\makeatletter
\@booleantrue\raggedcolumn@sw
\makeatother
\section*{Data and code availability}

No external datasets were used in this work. The Hopf-QBP
repository~\cite{lin2026hopfqbpcode} provides independent NumPy reference
implementations and Qibo~\cite{efthymiou2022qibo} circuit builders and decoders,
together with deterministic tests supported by continuous integration.
The repository also contains the direct-angle assigned Hopf CNOT ledger,
exact-recompilation and statistical analyses, and readout checks.
The accompanying reproducibility instructions describe how to run the
implementations and associated checks. Circuit checks were performed
using exact Qibo output distributions.

\vspace{1cm}

\section*{Generative AI disclosure}

This paper began in a midnight conversation between the first author and OpenAI's ChatGPT, using GPT-5.5 Pro and GPT-5.6 Pro, and developed through many rounds of conjecture, calculation, criticism, and revision. ChatGPT assisted in exploring constructions, developing arguments, checking implementations, and shaping the manuscript. GPT-6 was also used during the final revision phase. The dialogue served as a working blackboard: ideas were tested, discarded, or rebuilt as the mathematics required. The first author chose the questions, set the scope, selected the paths, and verified the final results. All authors reviewed the arguments, calculations, implementations, and manuscript and accept responsibility.

\FloatBarrier

\bibliography{references}

\onecolumngrid
\clearpage
\makeatletter
\@booleanfalse\raggedcolumn@sw
\makeatother

\appendix
\FloatBarrier

\section{Four-qubit implementation companion}
\label{app:four-qubit-example}

This appendix gives the $n=4$, $N=16$ instance. The figures display the native preparations, depth completion, addressed frames, global records, and depth-$2$ checkpoints. Three tables supplement the figures and general formulas.

\subsection{How to read the four-qubit construction}
\label{app:four-qubit-guide}

The wires are drawn with $q_1$ at the top and $q_4$ at the bottom, while computational-basis labels use
\begin{equation*}
  \ket{b_\ell}=\ket{q_4q_3q_2q_1}, \qquad \ell=(q_4q_3q_2q_1)_2.
\end{equation*}
A filled control fixes a basis bit to one, and an open control fixes it to zero. Rotations follow Eq.~\eqref{eq:paper-Ry-convention}; a library using $\operatorname{RY}(\varphi)=e^{-\iu\varphi Y/2}$ therefore receives software angle $2\theta$. The dashed blocks use the full-unitary, state-column, and active-interface contracts of Eq.~\eqref{eq:three-circuit-contracts}.

The coordinate blocks are
\begin{align*}
  \boldsymbol\theta_{\mathrm{mag}} &=(\theta_1,\ldots,\theta_{15}), & \boldsymbol\theta_{\mathrm{ph}} &=(\theta_{16},\ldots,\theta_{31}),\\
  \phi_\ell&:=\theta_{16+\ell}, &0&\leq\ell<16.
\end{align*}

\subsection{Native real and complex preparation}
\label{app:four-qubit-native}

The breadth-first internal indices are $1,\ldots,15$, and the leaves are $16+\ell$. In Fig.~\ref{fig:app-hopf4-tree}, the uncircled labels give the internal-node indices and the circled labels give the native gate order of Ref.~\cite{lin2026hopf}.

\begin{center}
  \begin{minipage}{\textwidth}
    \centering
    \resizebox{15cm}{!}{
      \begin{tikzpicture}[
          level distance=3.2cm,
          level 1/.style = {sibling distance=5cm},
          level 2/.style = {sibling distance=8cm},
          level 3/.style = {sibling distance=2cm},
          level 4/.style = {sibling distance=1cm},
          every node/.style = {draw=none, fill=OliveGreen!20, circle, minimum size=1cm, font=\bfseries},
          edge from parent/.style = {draw, ->, thick},
          label1/.style = {below=1.2cm, font=\normalsize, draw=none,fill=none},
          label2/.style = {below=1.4cm,circle,inner sep=0.6pt,minimum size=0.6pt, font=\normalsize, draw,fill=none},
          label3/.style = {below=1.4cm,circle,inner sep=0.6pt,minimum size=0.6pt, font=\scriptsize, draw,fill=none},
          label4/.style = {below=1.4cm,circle,inner sep=0.6pt,minimum size=0.6pt, line width=0.8pt,font=\normalsize, draw,fill=none},
          label5/.style = {below=0.6cm, font=\normalsize, draw=none,fill=none, color=NavyBlue}
        ]

        \node (root) {0000}
        child {node (node0) {0000}
            child  [shift={(0.375cm,0)}] {node (node00) {0000}
                child [shift={(-1cm,0)}] {node (node000) {0000}
                    child [shift={(-0.75cm,0)}] {node (node0000) {0000} }
                    child [shift={(-0.5cm,0)}] {node (node0001) {0001} }
                      [shift={(-0.375cm,0)}]  node[label1] {$\theta_{8}$}
                      node[label2, below=0.6cm] {4}
                  }
                child [shift={(-1cm,0)}] {node (node001) {0010}
                    child [shift={(-0.25cm,0)}] {node (node0010) {0010} }
                    child {node (node0011) {0011} }
                      node[label1] {$\theta_{9}$}
                      node[label2, below=0.6cm] {5}
                  }
                  [shift={(-0.5cm,0)}] node[label1] {$\theta_{4}$}
                  node[label2, below=0.6cm] {3}
              }
            child  [shift={(-4cm,0)}] {node (node01) {0100}
                child [shift={(0.375cm,0)}] {node (node010) {0100}
                    child [shift={(-0.75cm,0)}] {node (node0100) {0100} }
                    child [shift={(-0.5cm,0)}] {node (node0101) {0101} }
                      [shift={(-0.375cm,0)}] node[label1] {$\theta_{10}$}
                      node[label2, below=0.6cm] {7}
                  }
                child [shift={(-0.375cm,0)}] {node (node011) {0110}
                    child [shift={(0.5cm,0)}] {node (node0110) {0110}}
                    child [shift={(0.75cm,0)}] {node (node0111) {0111}}
                      [shift={(0.375cm,0)}] node[label1] {$\theta_{11}$}
                      node[label3, below=0.6cm] {11}
                  }
                  node[label1] {$\theta_{5}$}
                  node[label2, below=0.6cm] {6}
              }
              [shift={(-1cm,0)}] node[label1] {$\theta_{2}$}
              node[label2, below=0.6cm] {2}
          }
        child  {node (node1) {1000}
            child [shift={(4cm,0)}]  {node (node10) {1000}
                child [shift={(0.375cm,0)}]{node (node100) {1000}
                    child [shift={(-0.75cm,0)}] {node (node1000) {1000}}
                    child [shift={(-0.5cm,0)}]{node (node1001) {1001}}
                      [shift={(-0.375cm,0)}] node[label1] {$\theta_{12}$}
                      node[label3, below=0.6cm] {10}
                  }
                child [shift={(-0.375cm,0)}] {node (node101) {1010}
                    child [shift={(0.5cm,0)}]{node (node1010) {1010}}
                    child [shift={(0.75cm,0)}] {node (node1011) {1011}}
                      [shift={(0.375cm,0)}] node[label1] {$\theta_{13}$}
                      node[label3, below=0.6cm] {12}
                  }
                  node[label1] {$\theta_{6}$}
                  node[label2, below=0.6cm] {9}
              }
            child [shift={(-0.375cm,0)}] {node (node11) {1100}
                child [shift={(1cm,0)}] {node (node110) {1100}
                    child {node (node1100) {1100}}
                    child [shift={(0.25cm,0)}] {node (node1101) {1101}}
                      [shift={(0.125cm,0)}] node[label1] {$\theta_{14}$}
                      node[label3, below=0.6cm] {14}
                  }
                child [shift={(1cm,0)}]{node (node111) {1110}
                    child [shift={(0.5cm,0)}] {node (node1110) {1110}}
                    child [shift={(0.75cm,0)}] {node (node1111) {1111}}
                      [shift={(0.375cm,0)}] node[label1] {$\theta_{15}$}
                      node[label3, below=0.6cm] {15}
                  }
                  [shift={(0.5cm,0)}] node[label1] {$\theta_{7}$}
                  node[label3, below=0.6cm] {13}
              }
              [shift={(1cm,0)}] node[label1] {$\theta_{3}$}
              node[label2, below=0.6cm] {8}
          }  node[label1] {$\theta_{1}$}
          node[label4, below=0.6cm] {1};

        \path (node0000) node[label5] {$\theta_{16}$};
        \path (node0001) node[label5] {$\theta_{17}$};
        \path (node0010) node[label5] {$\theta_{18}$};
        \path (node0011) node[label5] {$\theta_{19}$};
        \path (node0100) node[label5] {$\theta_{20}$};
        \path (node0101) node[label5] {$\theta_{21}$};
        \path (node0110) node[label5] {$\theta_{22}$};
        \path (node0111) node[label5] {$\theta_{23}$};
        \path (node1000) node[label5] {$\theta_{24}$};
        \path (node1001) node[label5] {$\theta_{25}$};
        \path (node1010) node[label5] {$\theta_{26}$};
        \path (node1011) node[label5] {$\theta_{27}$};
        \path (node1100) node[label5] {$\theta_{28}$};
        \path (node1101) node[label5] {$\theta_{29}$};
        \path (node1110) node[label5] {$\theta_{30}$};
        \path (node1111) node[label5] {$\theta_{31}$};

      \end{tikzpicture}
    }
    \captionof{figure}{Four-qubit Hopf tree in the first-paper convention.  A left branch contributes $\cos\theta_j$ and a right branch contributes $\sin\theta_j$.  The internal labels are the magnitude coordinates, the blue labels beneath the leaves are the complex phase coordinates, and the circled numbers give the native gate-application order.  Padded binary labels identify subtree anchors and leaves.}
    \label{fig:app-hopf4-tree}
  \end{minipage}
\end{center}

The promoted two-leaf completion used by the complex circuit is
\begin{equation*}
  R_{\mathbb C}(\theta_a,\theta_b,\theta_c) =
  \begin{pmatrix}
    e^{\iu\theta_b}\cos\theta_a & -e^{-\iu\theta_c}\sin\theta_a \\
    e^{\iu\theta_c}\sin\theta_a & \phantom{-}e^{-\iu\theta_b}\cos\theta_a
  \end{pmatrix}.
\end{equation*}
Its first column assigns phases $\theta_b$ and $\theta_c$ to two sibling leaves. The native circuit follows the circled tree order; its complete control schedule is inherited from Ref.~\cite{lin2026hopf} and implemented in the accompanying validation code.

\begin{center}
  \begin{minipage}{\textwidth}
    \centering
    \maybegraphic[scale=0.25]{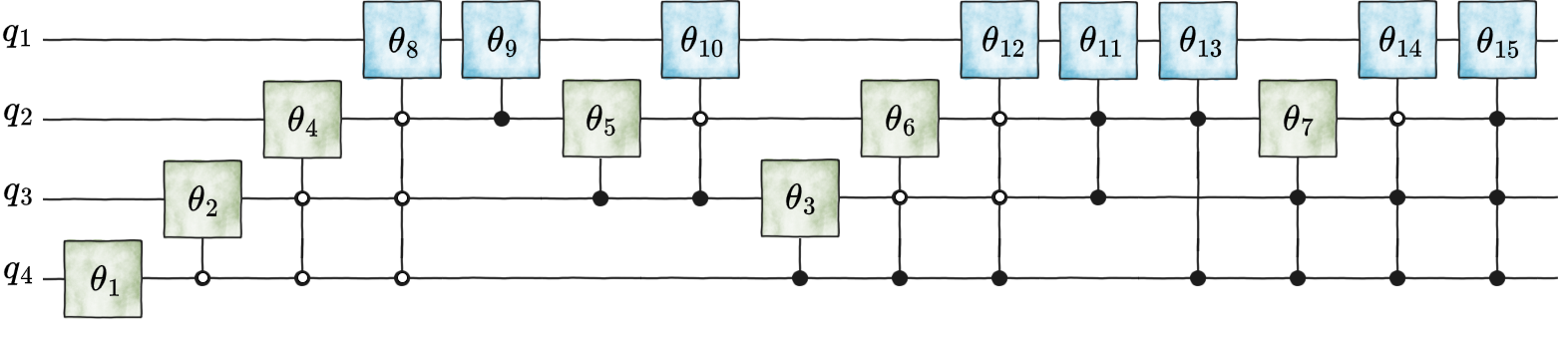}
    \captionof{figure}{Native four-qubit Hopf ansatz circuit.  For $\HopfReal(\boldsymbol\theta_{\mathrm{mag}})$, every box is $R_y(\theta_j)$.  For $\HopfComplex(\boldsymbol\theta)$, each final-layer gate $j=8+r$, $0\leq r<8$, is promoted to $R_{\mathbb C}(\theta_{8+r},\theta_{16+2r},\theta_{17+2r})$.  The blue boxes identify that promoted final layer.}
    \label{fig:app-native-hopf-circuit}
  \end{minipage}
\end{center}

The initialized native columns prepare $\ket{\psi^{\mathbb R}}$ and $\ket{\psi^{\mathbb C}}$; the complete state-column contracts are collected below with the depth and frame completions.

\subsection{Metric factors and tangent-state registry}
\label{app:four-qubit-metric}

The single registry below combines the incoming metric factor, the real-frame anchor and marker, and the checkpoint prefix and target for every internal node.

\begin{center}
  \begin{minipage}{0.82\textwidth}
    \centering
    \scriptsize
    \setlength{\tabcolsep}{6pt}
    \begin{tabular}{c c c c c}
      \toprule
      $j$ & $(d,r)$ & $\sqrt{g^{\mathbb R}_{j,j}}$ & $\ell_0(j)$ & $\lambda(j)$\\
      \midrule
      1  & $(0,0)$ & $1$ & 0  & 8\\
      2  & $(1,0)$ & $\cos\theta_1$ & 0  & 4\\
      3  & $(1,1)$ & $\sin\theta_1$ & 8  & 12\\
      4  & $(2,0)$ & $\cos\theta_1\cos\theta_2$ & 0  & 2\\
      5  & $(2,1)$ & $\cos\theta_1\sin\theta_2$ & 4  & 6\\
      6  & $(2,2)$ & $\sin\theta_1\cos\theta_3$ & 8  & 10\\
      7  & $(2,3)$ & $\sin\theta_1\sin\theta_3$ & 12 & 14\\
      8  & $(3,0)$ & $\cos\theta_1\cos\theta_2\cos\theta_4$ & 0  & 1\\
      9  & $(3,1)$ & $\cos\theta_1\cos\theta_2\sin\theta_4$ & 2  & 3\\
      10 & $(3,2)$ & $\cos\theta_1\sin\theta_2\cos\theta_5$ & 4  & 5\\
      11 & $(3,3)$ & $\cos\theta_1\sin\theta_2\sin\theta_5$ & 6  & 7\\
      12 & $(3,4)$ & $\sin\theta_1\cos\theta_3\cos\theta_6$ & 8  & 9\\
      13 & $(3,5)$ & $\sin\theta_1\cos\theta_3\sin\theta_6$ & 10 & 11\\
      14 & $(3,6)$ & $\sin\theta_1\sin\theta_3\cos\theta_7$ & 12 & 13\\
      15 & $(3,7)$ & $\sin\theta_1\sin\theta_3\sin\theta_7$ & 14 & 15\\
      \bottomrule
    \end{tabular}
    \captionof{table}{Four-qubit internal-node registry.  For $j=2^d+r$, the checkpoint prefix is $r$, its target is $q_{4-d}$, $\ell_0(j)$ is the left-child anchor, and $\lambda(j)$ addresses the normalized local complement.}
    \label{tab:app-four-qubit-node-registry}
  \end{minipage}
\end{center}

The same incoming factors govern the complex magnitude block by Eq.~\eqref{eq:complex-real-magnitude-metric}. For the nontrivial node $5$, the normalized local complement is
\begin{equation*}
  \ket{e_5^{\mathbb R}} =-\sin\theta_5 \left(\cos\theta_{10}\ket{0100}+\sin\theta_{10}\ket{0101}\right) +\cos\theta_5 \left(\cos\theta_{11}\ket{0110}+\sin\theta_{11}\ket{0111}\right),
\end{equation*}
and the corresponding incoming factor is
\begin{equation}
  \sqrt{g^{\mathbb R}_{5,5}} =\cos\theta_1\sin\theta_2.
  \label{eq:app-node5-path-factor}
\end{equation}

\subsection{Depth completion $U_{\mathrm{chk}}$}
\label{app:four-qubit-depth-completion}

The four depth layers are
\begin{align*}
  U_0&=R_y^{(q_4)}(\theta_1),\\
  U_1&=\sum_{r=0}^1\ket r\!\bra r_{q_4} \otimes R_y^{(q_3)}(\theta_{2+r}),\\
  U_2&=\sum_{r=0}^3\ket r\!\bra r_{q_4q_3} \otimes R_y^{(q_2)}(\theta_{4+r}),\\
  U_3&=\sum_{r=0}^7\ket r\!\bra r_{q_4q_3q_2} \otimes R_y^{(q_1)}(\theta_{8+r}).
\end{align*}
Thus
\begin{equation*}
  \Uchk=U_3U_2U_1U_0, \qquad A_2:=U_2U_1U_0, \qquad B_2:=U_3, \qquad \Uchk=B_2A_2.
\end{equation*}

\begin{center}
  \begin{minipage}{\textwidth}
    \centering
    \maybegraphic[scale=0.25]{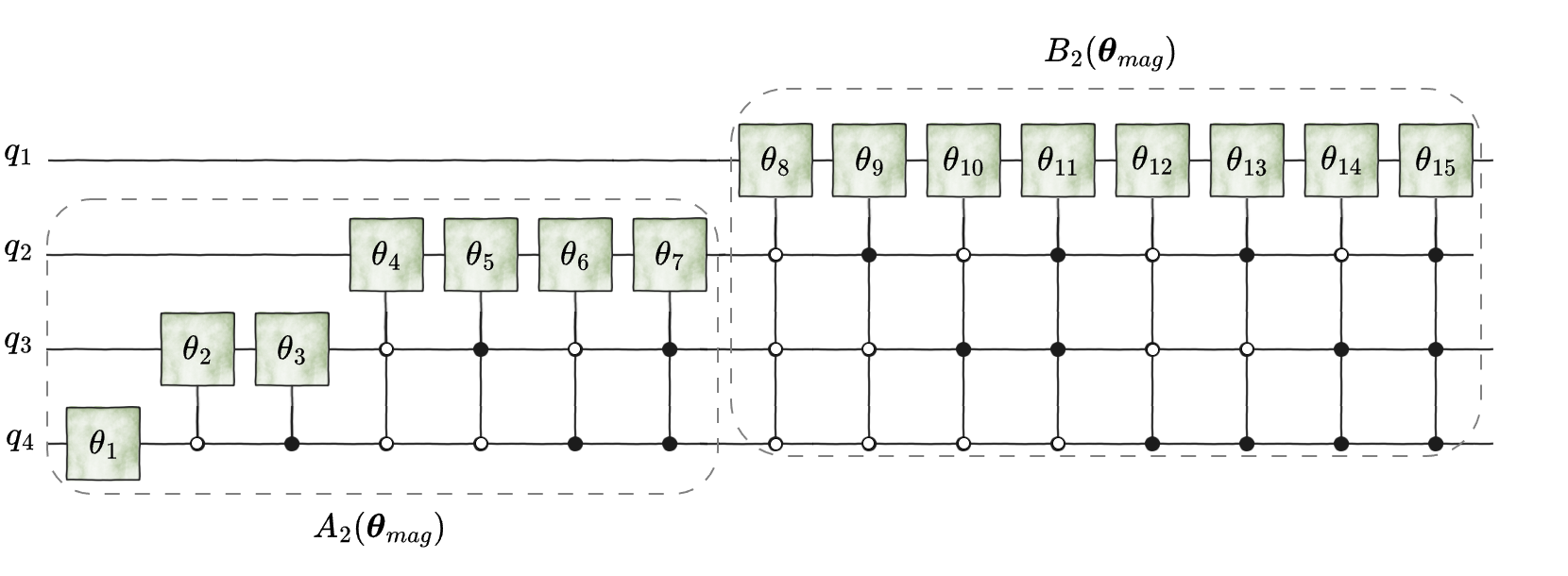}
    \captionof{figure}{Four-qubit depth completion.  The left bracket is
      $A_2(\boldsymbol\theta_{\mathrm{mag}})=U_2U_1U_0$ and the right bracket is
      $B_2(\boldsymbol\theta_{\mathrm{mag}})=U_3$, giving the full-unitary factorization
      $\Uchk=B_2A_2$.  Open controls select zero-valued prefix bits.
      Section~\ref{app:four-qubit-checkpoints} constructs the corresponding complex
      depth-$2$ completion.}
    \label{fig:app-Uchk-gatelevel}
  \end{minipage}
\end{center}

The native and depth completions share the initialized real state column in Eq.~\eqref{eq:app-three-real-state-columns}, although their actions on the remaining columns generally differ.

\subsection{Real addressed frame $W_{\mathbb R}$}
\label{app:four-qubit-real-frame}

For $j=2^d+r$, Table~\ref{tab:app-four-qubit-node-registry} gives the active pair $\ket{b_{\ell_0(j)}}\leftrightarrow \ket{b_{\lambda(j)}}$. The target is $q_{4-d}$, and the controls fix the remaining bits to the anchor pattern. Applying parent layers before descendants gives
\begin{equation*}
  \WR(\boldsymbol\theta_{\mathrm{mag}})=V_3V_2V_1V_0,
\end{equation*}
with $V_0$ applied first in state-update order.

\begin{center}
  \begin{minipage}{\textwidth}
    \centering
    \maybegraphic[scale=0.25]{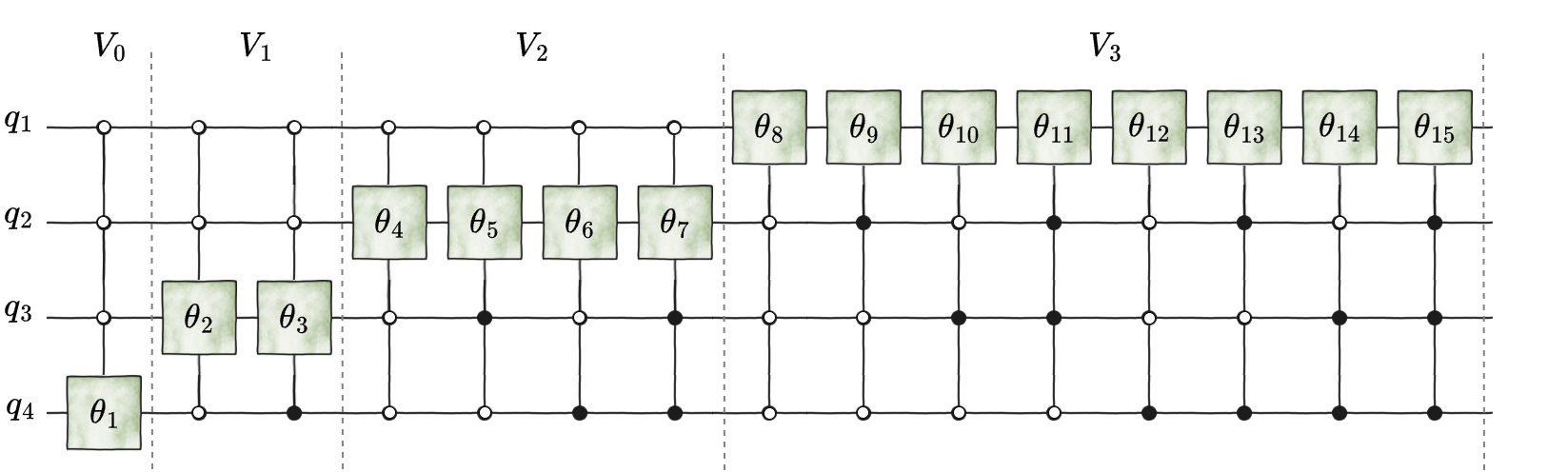}
    \captionof{figure}{Fully addressed four-qubit real-magnitude frame $\WR$, with $V_0$ acting first.
      Every non-target qubit is fixed by a closed or open control, so each gate
      preserves the required state-and-complement column action.  A box labeled
      $\theta_j$ denotes Eq.~\eqref{eq:paper-Ry-convention}.  The same addressed
      gate skeleton is used for the complex-magnitude frame constructed in
      Sec.~\ref{app:four-qubit-complex-frame}.  Circuit layouts use the visual
      language of the quantum-circuit-drawio library~\cite{wilkens2023quantumcircuitlibrary}.}
    \label{fig:app-frame-gatelevel}
  \end{minipage}
\end{center}

For node $5$, the addressed gate first gives
\[
  G_5\ket{0110} =-\sin\theta_5\ket{0100}+\cos\theta_5\ket{0110}.
\]
The descendant gates then map $\ket{0110}$ to the previously displayed $\ket{e_5^{\mathbb R}}$. Together with Eq.~\eqref{eq:app-node5-path-factor}, this gives $\partial_{\theta_5}\ket{\psi^{\mathbb R}} =\cos\theta_1\sin\theta_2\ket{e_5^{\mathbb R}}$. The complete frame obeys
\begin{equation*}
  \WR\ket{0000}=\ket{\psi^{\mathbb R}}, \qquad \WR\ket{b_{\lambda(j)}}=\ket{e_j^{\mathbb R}}, \quad 1\leq j\leq15.
\end{equation*}
Thus
\begin{equation}
  \HopfReal\ket{0000} = \Uchk\ket{0000} = \WR\ket{0000} = \ket{\psi^{\mathbb R}},
  \label{eq:app-three-real-state-columns}
\end{equation}
while the three circuits generally give different full-unitary completions.

\subsection{Complex magnitude frame $W_{\mathbb C}$}
\label{app:four-qubit-complex-frame}

Define
\begin{align*}
  \Dph(\boldsymbol\theta_{\mathrm{ph}}) &:= \sum_{\ell=0}^{15}e^{\iu\phi_\ell} \ket{b_\ell}\!\bra{b_\ell},\\
  W_{\mathbb C}(\boldsymbol\theta) &:= \Dph(\boldsymbol\theta_{\mathrm{ph}}) \WR(\boldsymbol\theta_{\mathrm{mag}}).
\end{align*}
Then
\begin{align}
  \HopfComplex(\boldsymbol\theta)\ket{0000} &= \Dph\Uchk\ket{0000} = W_{\mathbb C}(\boldsymbol\theta)\ket{0000} = \ket{\psi^{\mathbb C}},\nonumber\\
  W_{\mathbb C}\ket{b_{\lambda(j)}} &=\ket{e_j^{\mathbb C}}, \qquad 1\leq j\leq15,\nonumber\\
  W_{\mathbb C}^\dagger &=\WR^\dagger\Dph^\dagger.
  \label{eq:app-WC-frame-columns}
\end{align}
The first line is a state-column contract for the native and depth forwards; the remaining two identities fix the complete complex-magnitude frame used by the global record.

\subsubsection{Addressed $R_{\mathbb C}$ compilation up to a compiler global phase}
\label{app:four-qubit-RC-compiler}

Let
\begin{equation*}
  \bar\phi:=\frac1{16}\sum_{\ell=0}^{15}\phi_\ell, \qquad \widehat\phi_\ell:=\phi_\ell-\bar\phi,
\end{equation*}
and define centered subtree sums by
\begin{equation*}
  \Phi_{16+\ell}:=\widehat\phi_\ell, \qquad \Phi_j:=\Phi_{2j}+\Phi_{2j+1}.
\end{equation*}
On the same active pair and with the same controls as the real gate, use
\begin{equation}
  \widetilde G_j := R_{\mathbb C}\!\left(\theta_j,-\Phi_{2j+1},-\Phi_{2j}\right).
  \label{eq:app-WC-RC-gate}
\end{equation}
With the same parent-before-descendant order,
\begin{equation}
  \boxed{ \widetilde W_{\mathbb C}^{(R_{\mathbb C})} =e^{-\iu\bar\phi}W_{\mathbb C}. }
  \label{eq:app-WC-RC-compiler-identity}
\end{equation}
The compiler global phase multiplies every column equally and therefore leaves the protocol probabilities unchanged.

The recursion determines every gate directly. For example, the second and third arguments of $\widetilde G_1$ are
\[
  \left(-\sum_{\ell=8}^{15}\widehat\phi_\ell, -\sum_{\ell=0}^{7}\widehat\phi_\ell\right),
\]
while those of $\widetilde G_5$ are
\[
  \left(-\widehat\phi_6-\widehat\phi_7, -\widehat\phi_4-\widehat\phi_5\right).
\]
All remaining arguments follow from the same two subtree sums.

\subsection{Global gradient records}
\label{app:four-qubit-global-records}

The real circuit uses native preparation for the initialized state column and $\WR^\dagger$ for the complete reverse frame.

\begin{center}
  \begin{minipage}{\textwidth}
    \centering
    \maybegraphic[scale=0.25]{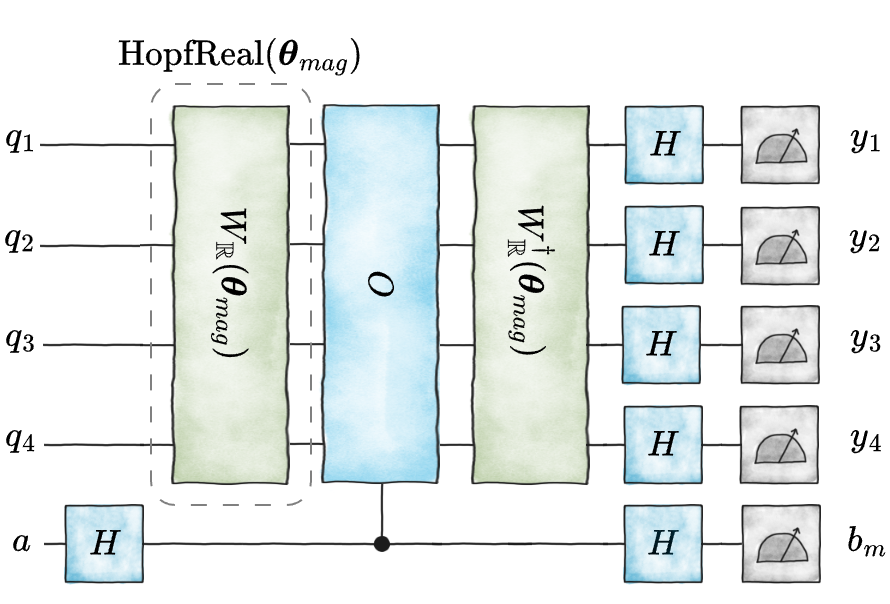}
    \captionof{figure}{Real-Hopf global-gradient circuit.  The forward bracket
      uses the state-column contract
      $\HopfReal(\boldsymbol\theta_{\mathrm{mag}})\ket{0000}=\WR(\boldsymbol\theta_{\mathrm{mag}})\ket{0000}$.
      The reverse block is $\WR^\dagger$.  The measured bits are $b_m$ and
      $\boldsymbol y=(y_1,y_2,y_3,y_4)$.}
    \label{fig:real-global-frame-protocol}
  \end{minipage}
\end{center}

Repeating this sole real-global template $S_{\mathrm{mag}}^{\mathbb R}$ times produces $S_{\mathrm{mag}}^{\mathbb R}$ joint five-bit outcomes, each reused for all fifteen markers. For node $5$, $\lambda(5)=0110$ and $\sqrt{g^{\mathbb R}_{5,5}}=\cos\theta_1\sin\theta_2$, so the main-text parity record becomes
\begin{equation*}
  Z_5^{\mathbb R} = 2\cos\theta_1\sin\theta_2 (-1)^{b_m+y_3+y_2}.
\end{equation*}
The other fourteen components are read from the same outcome using their markers in Table~\ref{tab:app-four-qubit-node-registry}.

For complex magnitudes, the initialized-state chain is the first line of Eq.~\eqref{eq:app-WC-frame-columns}. The same Walsh parity then decodes the phase-dressed magnitude frame.

\begin{center}
  \begin{minipage}{\textwidth}
    \centering
    \maybegraphic[scale=0.25]{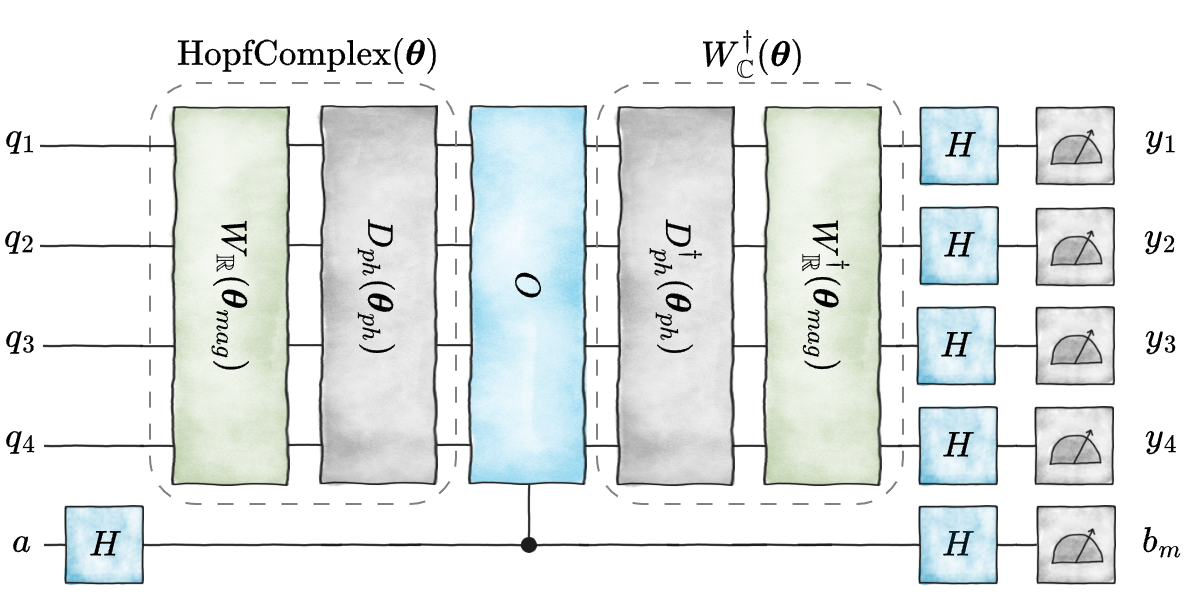}
    \captionof{figure}{Complex-Hopf magnitude circuit.  The forward bracket uses
      the state-column chain in Eq.~\eqref{eq:app-WC-frame-columns}.
      The reverse bracket uses the full-unitary identity
      $W_{\mathbb C}^{\dagger}=\WR^{\dagger}\Dph^{\dagger}$, so the physical order is
      $\Dph^{\dagger}$ followed by $\WR^{\dagger}$.  The addressed $R_{\mathbb C}$ compiler
      realizes the same decoder up to the compiler global phase in
      Eq.~\eqref{eq:app-WC-RC-compiler-identity}.}
    \label{fig:complex-magnitude-protocol}
  \end{minipage}
\end{center}

Repeating this magnitude template $S_{\mathrm{mag}}^{\mathbb C}$ times gives $S_{\mathrm{mag}}^{\mathbb C}$ joint five-bit all-$X$ outcomes, and every outcome is reused across all fifteen complex-magnitude markers.

The leaf phases use the direct ancilla-$Y$/system-$Z$ record of Eq.~\eqref{eq:complex-phase-estimator}.

\begin{center}
  \begin{minipage}{\textwidth}
    \centering
    \maybegraphic[scale=0.25]{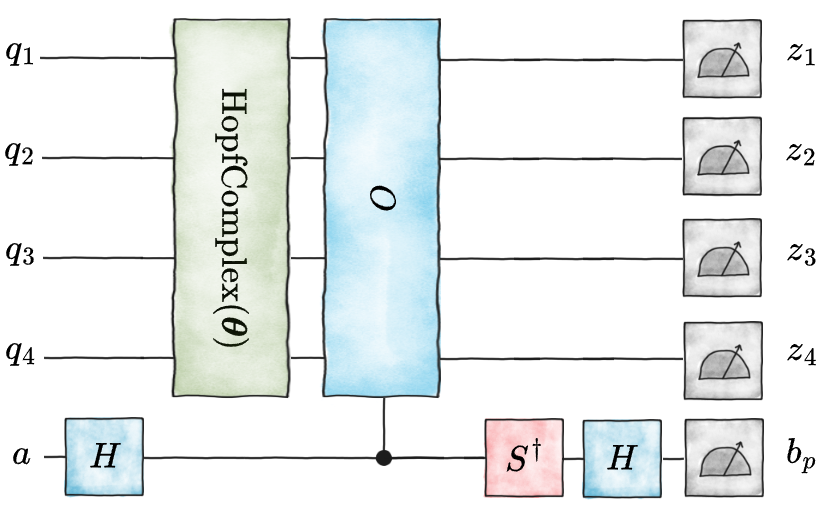}
    \captionof{figure}{Direct complex phase circuit.  Native
      $\HopfComplex(\boldsymbol\theta)$ prepares the state through the state-column
      contract in Eq.~\eqref{eq:app-WC-frame-columns}.  After the
      controlled reflection, the system is measured in $Z$ and the branch ancilla
      in $Y$, implemented by $S^\dagger$ followed by $H$.  This circuit uses the
      leaf-local phase record directly.}
    \label{fig:complex-phase-protocol}
  \end{minipage}
\end{center}

Repeating the phase template $S_{\mathrm{ph}}^{\mathbb C}$ times gives a signed histogram over the sixteen leaves; each five-bit outcome updates exactly one phase bin. Thus the three figures implement, respectively, all fifteen real magnitudes, all fifteen complex magnitudes, and all sixteen complex leaf phases without introducing any four-qubit-specific decoder formula.

\subsection{Checkpoint records and complex suffix $B_{2,\mathbb C}$}
\label{app:four-qubit-checkpoints}

The four real checkpoint settings are:

\begin{center}
  \begin{minipage}{0.9\textwidth}
    \centering
    \scriptsize
    \setlength{\tabcolsep}{5pt}
    \begin{tabular}{c c c c c}
      \toprule
      $d$ & Prefix in $Z$ & Target in $Y$ & Inverse suffix & Discard\\
      \midrule
      $0$ & none & $q_4$ & $U_1^\dagger U_2^\dagger U_3^\dagger$ & $q_3q_2q_1$\\
      $1$ & $q_4$ & $q_3$ & $U_2^\dagger U_3^\dagger$ & $q_2q_1$\\
      $2$ & $q_4q_3$ & $q_2$ & $U_3^\dagger$ & $q_1$\\
      $3$ & $q_4q_3q_2$ & $q_1$ & $I$ & none\\
      \bottomrule
    \end{tabular}
    \captionof{table}{All four real checkpoint settings.  Compiler charges are consolidated in Table~\ref{tab:app-four-qubit-resource-ledger}.}
    \label{tab:app-checkpoint-settings}
  \end{minipage}
\end{center}

At depth $2$, the measured prefix and coordinate index are
\begin{equation*}
  \widehat r=(z_4z_3)_2, \qquad j=4+\widehat r.
\end{equation*}
The target $q_2$ and branch ancilla are measured in $Y$, the prefix $q_4q_3$ in $Z$, and $q_1$ is discarded.

\begin{center}
  \begin{minipage}{\textwidth}
    \centering
    \maybegraphic[scale=0.25]{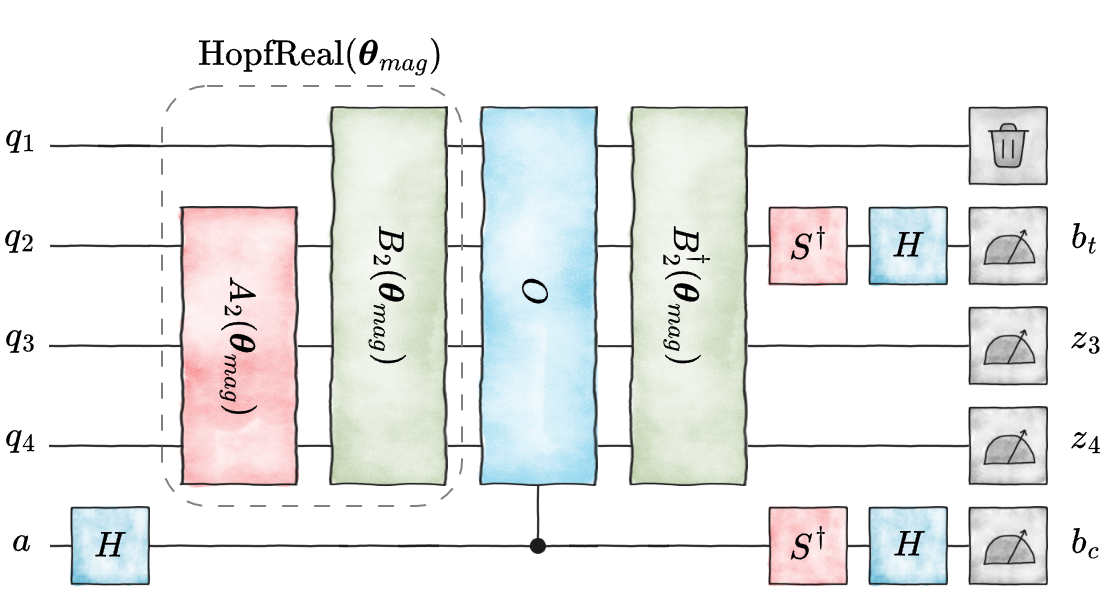}
    \captionof{figure}{Real-Hopf depth-$2$ checkpoint.  The forward bracket uses
      the state-column contract
      $\HopfReal(\boldsymbol\theta_{\mathrm{mag}})\ket{0000}=B_2A_2\ket{0000}$.
      After controlled-$O$, the reverse suffix is $B_2^\dagger=U_3^\dagger$.  One
      execution returns $-2(-1)^{b_c+b_t}\mathbf e_{\widehat r}$ for nodes
      $4,5,6,7$.}
    \label{fig:real-checkpoint-depth2}
  \end{minipage}
\end{center}

For node $5$, corresponding to $r=1$,
\begin{equation*}
  Q_{2,1} =\ket{01}\!\bra{01}_{q_4q_3}\otimes Y_{q_2}\otimes I_{q_1},
\end{equation*}
and with $\ket{\alpha_2}:=A_2\ket{0000}$ and $\ket{\beta_2^{\mathbb R}}:=B_2^\dagger O\ket{\psi^{\mathbb R}}$,
\begin{equation*}
  \partial_{\theta_5}\EO = 2\Imag\bra{\beta_2^{\mathbb R}}Q_{2,1}\ket{\alpha_2}.
\end{equation*}

\subsubsection{Complex depth-$2$ active-interface construction}
\label{app:four-qubit-complex-checkpoint}

The reached-interface projector is
\begin{equation*}
  P_2:=I_{q_4q_3q_2}\otimes\ket0\!\bra0_{q_1}.
\end{equation*}
Define the eight-sector completion compactly by
\begin{equation*}
  \boxed{ B_{2,\mathbb C}(\boldsymbol\theta) := \sum_{p=0}^7 \ket p\!\bra p_{q_4q_3q_2} \otimes R_{\mathbb C}^{(q_1)} \!\left(\theta_{8+p},\phi_{2p},\phi_{2p+1}\right). }
\end{equation*}
Its first column in each prefix sector satisfies
\begin{equation*}
  B_{2,\mathbb C}\ket{p,0} = e^{\iu\phi_{2p}}\cos\theta_{8+p}\ket{p,0} + e^{\iu\phi_{2p+1}}\sin\theta_{8+p}\ket{p,1} = \Dph B_2\ket{p,0},
\end{equation*}
so
\begin{equation}
  \boxed{B_{2,\mathbb C}P_2=\Dph B_2P_2.}
  \label{eq:app-B2C-interface-identity}
\end{equation}
The unused second column of each $R_{\mathbb C}$ gate completes the unitary outside the reached interface. With $\ket{\beta_2^{\mathbb C}}:=B_{2,\mathbb C}^\dagger O\ket{\psi^{\mathbb C}}$, the four transported derivatives are
\begin{equation}
  \partial_{\theta_{4+r}}\EO = 2\Imag\bra{\beta_2^{\mathbb C}}Q_{2,r}\ket{\alpha_2}, \qquad r=0,1,2,3.
  \label{eq:app-complex-depth2-checkpoint-identity}
\end{equation}

\begin{center}
  \begin{minipage}{\textwidth}
    \centering
    \maybegraphic[scale=0.25]{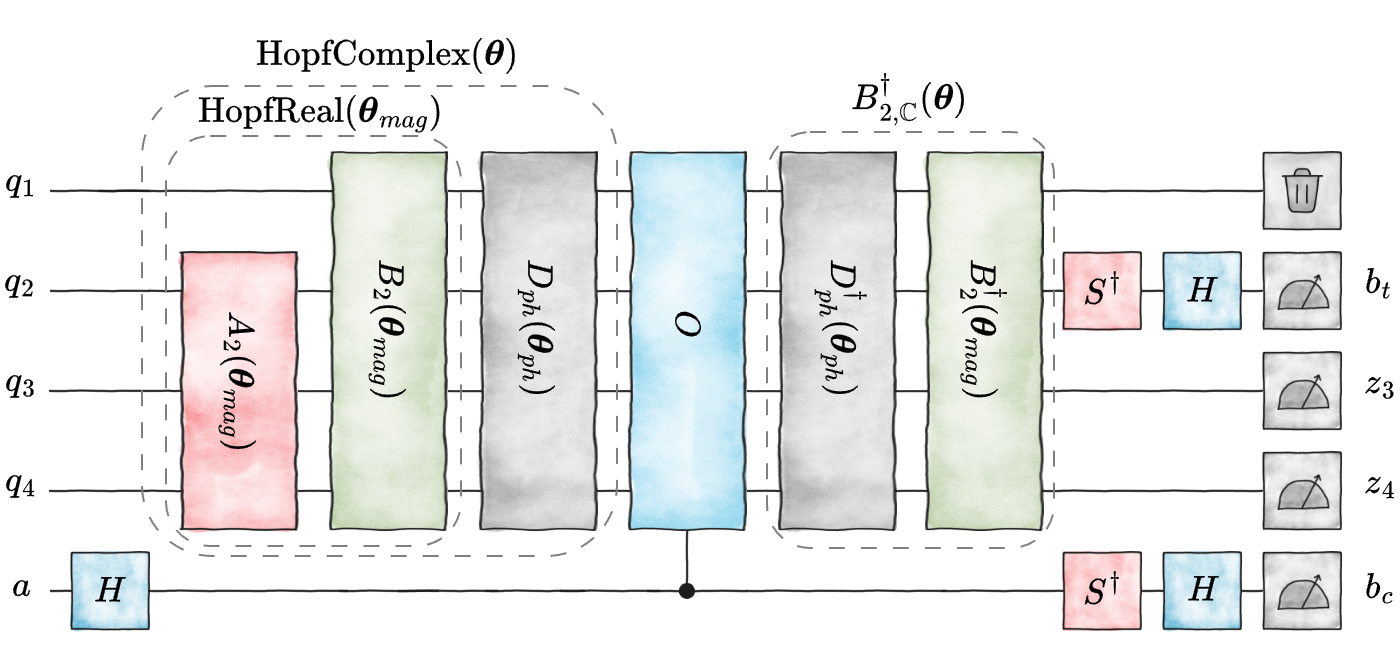}
    \captionof{figure}{Complex-Hopf magnitude checkpoint at depth $2$.  The
      inner and outer forward brackets use the real and complex state-column
      contracts.  The reverse bracket uses the active-interface identity
      $B_{2,\mathbb C}P_2=\Dph B_2P_2$, which gives the same checkpoint correlators
      as the separated physical sequence $\Dph^\dagger$ followed by $B_2^\dagger$.
      The readout measures $q_4q_3$ in $Z$ and $q_2$ together with the ancilla in
      $Y$, while $q_1$ is discarded because it belongs to the unused lower suffix.}
    \label{fig:complex-checkpoint-depth2}
  \end{minipage}
\end{center}

In either depth-$2$ circuit, one execution gives
\begin{equation*}
  Z_2^{\mathrm{chk}} =-2(-1)^{b_c+b_t}\mathbf e_{\widehat r}, \qquad \E Z_2^{\mathrm{chk}} = \left( \partial_{\theta_4}\EO, \partial_{\theta_5}\EO, \partial_{\theta_6}\EO, \partial_{\theta_7}\EO \right)^\mathsf T, \qquad \|Z_2^{\mathrm{chk}}\|_2=2.
\end{equation*}
Repeating the selected depth-$2$ template and averaging this one-hot record recovers all four displayed derivatives. Recovering another depth requires its own suffix and measurement target from Table~\ref{tab:app-checkpoint-settings}.

\subsection{Four-qubit assigned Hopf CNOT ledger}
\label{app:four-qubit-ledger}

Under the adopted direct-angle ledger inherited from the first paper, a three-controlled $R_y$ or $R_{\mathbb C}$ gate costs $14$ CNOTs. The native schedule contains one uncontrolled, four one-controlled, six two-controlled, and four three-controlled gates, giving charge $100$. The depth layers cost $0,4,24,112$, so $C_{\mathrm H}(\Uchk)=140$ and $C_{\mathrm H}(B_2)=112$. The addressed frames use fifteen three-controlled gates, whereas $B_{2,\mathbb C}$ uses the eight sectors of the selected depth-$2$ suffix.

\begin{center}
  \begin{minipage}{\textwidth}
    \centering
    \scriptsize
    \setlength{\tabcolsep}{4.5pt}
    \begin{tabular}{l c p{8.7cm}}
      \toprule
      Item & Charge & Role, contract, or implemented blocks\\
      \midrule
      $\HopfReal$ & $100$ & Native real state preparation.\\
      $\HopfComplex$ & $100$ & Native complex state preparation with eight final-layer $R_{\mathbb C}$ gates.\\
      $\Uchk$ & $140$ & Depth completion and designated separated scalar/checkpoint forward program.\\
      $B_2$ & $112$ & Real inverse suffix at depth $2$.\\
      $\WR$ & $210$ & Complete real state-and-tangent frame.\\
      $W_{\mathbb C}$ & $210$ & Complete complex magnitude frame, compiled up to one compiler global phase.\\
      $B_{2,\mathbb C}$ & $112$ & Complex suffix satisfying $B_{2,\mathbb C}P_2=\Dph B_2P_2$.\\
      \midrule
      Real global record & $310$ & $\HopfReal$ followed by $\WR^\dagger$.\\
      Complex magnitude record & $310$ & $\HopfComplex$ followed by $W_{\mathbb C}^\dagger$.\\
      Direct complex phase record & $100$ & $\HopfComplex$ with no reverse frame.\\
      Real depth-$2$ checkpoint & $212$ & $\HopfReal$ followed by $B_2^\dagger$.\\
      Complex depth-$2$ checkpoint & $212$ & $\HopfComplex$ followed by $B_{2,\mathbb C}^\dagger$.\\
      \bottomrule
    \end{tabular}
    \captionof{table}{Four-qubit compiler and record charges under the same
      direct-angle $C_{\mathrm H}$ functional used in the theorem; the controlled
      observable and
      readout are excluded by definition.  The theorem uses the designated separated
      schedules, while this table evaluates native-forward and integrated reverse
      alternatives whose finite constant ratios need not be $2$ and $1$.}
    \label{tab:app-four-qubit-resource-ledger}
  \end{minipage}
\end{center}

The $310$ global charge contributes to all fifteen magnitude coordinates; the $212$ checkpoint charge contributes to the four active depth-$2$ coordinates. Native-forward records from all four real checkpoint families cost $240+236+212+100=788$ in the same Hopf-only ledger. The finite comparison displays the same global-versus-local protocol choice: checkpointing is cheaper for an active layer, while the complete frame reuses one preparation, oracle call, and outcome across every depth. These native and integrated finite ratios are illustrative; the theorem instead uses the constants of the designated separated schedules.

\FloatBarrier

\section{Technical foundations, proofs, and compiler derivations}
\label{app:all-proofs}

This appendix develops the proofs from the common interference identities to the complete-gradient theorem.

\subsection{Controlled-interference identities}
\label{app:generic-reverse}

Let $W\ket0^{\otimes n}=\ket\psi$ and let an admissible forward circuit prepare the same state. Starting from $\ket+_a\ket0^{\otimes n}$, the forward block, $\ctrl(O)$, and $W^\dagger$ produce Eq.~\eqref{eq:generic-frame-interference-state}. Since $X(k)\ket0^{\otimes n}=\ket{b_k}$,
\begin{equation*}
  \bra{\Omega_W}X_a\otimes X(k)\ket{\Omega_W} = \frac12\left( \bra0X(k)W^\dagger O\ket\psi + \bra\psi O W X(k)\ket0 \right) = \Real\bra{b_k}W^\dagger O\ket\psi.
\end{equation*}
The measured eigenvalue of $X_a\otimes X(k)$ is $(-1)^{b+k\cdot\boldsymbol y}$, proving Eq.~\eqref{eq:frame-parity-identity}.

For normalized branches $\ket\alpha,\ket\beta$ and a Hermitian system operator $A$, set $\ket\Omega=(\ket0\ket\alpha+\ket1\ket\beta)/\sqrt2$. Using $\bra0Y\ket1=-\iu$ and $\bra1Y\ket0=\iu$ gives
\begin{equation}
  \bra\Omega Y_a\otimes A\ket\Omega = \Imag\bra\alpha A\ket\beta = -\Imag\bra\beta A\ket\alpha.
  \label{eq:generic-Y-quadrature}
\end{equation}
The direct phase record uses the first form and the checkpoint record the second.

\subsection{Balanced Hopf differential frame}
\label{app:proof-real-geometry}

For a tree node $j$, let $\mathcal H_j$ be the span of its descendant basis leaves and let $\mathcal I(j)$ be the internal nodes in that subtree.

\begin{lemma}[Subtree frame]
  \label{lem:subtree-basis}
  The set $\mathcal B_j:=\{\ket{\sigma_j}\}\cup \{\ket{e_k^{\mathbb R}}:k\in\mathcal I(j)\}$ is an orthonormal basis of $\mathcal H_j$.
\end{lemma}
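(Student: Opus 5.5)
The plan is structural induction on the subtree, from the leaves upward. For a leaf $j=N+\ell$, $\mathcal H_j=\operatorname{span}\{\ket{b_\ell}\}$ and $\mathcal I(j)=\emptyset$, so $\mathcal B_j=\{\ket{\sigma_j}\}=\{\ket{b_\ell}\}$ is trivially an orthonormal basis. For an internal node $j$, assume the claim for both children $2j$ and $2j+1$. We have the decompositions
\begin{equation*}
  \mathcal H_j=\mathcal H_{2j}\oplus\mathcal H_{2j+1},
  \qquad
  \mathcal I(j)=\{j\}\cup\mathcal I(2j)\cup\mathcal I(2j+1).
\end{equation*}
Here $\mathcal H_{2j}$ and $\mathcal H_{2j+1}$ are spanned by disjoint sets of computational leaves, hence orthogonal. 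By the inductive hypothesis, $\mathcal B_{2j}\cup\mathcal B_{2j+1}$ is an orthonormal basis of $\mathcal H_j$.

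The key step is to note that $\mathcal B_j$ comes from $\mathcal B_{2j}\cup\mathcal B_{2j+1}$ by a rotation of only two elements. Keep every descendant complement $\ket{e_k^{\mathbb R}}$ with $k\in\mathcal I(2j)\cup\mathcal I(2j+1)$ unchanged. Replace the pair $\{\ket{\sigma_{2j}},\ket{\sigma_{2j+1}}\}$ by
\begin{equation*}
  \begin{pmatrix}\ket{\sigma_j}\\ \ket{e_j^{\mathbb R}}\end{pmatrix}
  =
  \begin{pmatrix}\cos\theta_j & \sin\theta_j\\ -\sin\theta_j & \cos\theta_j\end{pmatrix}
  \begin{pmatrix}\ket{\sigma_{2j}}\\ \ket{\sigma_{2j+1}}\end{pmatrix},
\end{equation*}
using the split recursion and Eq.~\eqref{eq:real-frame-column-recursion}. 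This $2\times2$ matrix is real orthogonal for every real $\theta_j$, including the sign-carrying range $[0,2\pi)$ of the final layer. An orthogonal change of basis within the span of two orthonormal vectors preserves orthonormality. It also preserves orthogonality to the remaining vectors, which are orthogonal to both $\ket{\sigma_{2j}}$ and $\ket{\sigma_{2j+1}}$. So $\mathcal B_j$ is again an orthonormal set of the right cardinality. That cardinality is $1+|\mathcal I(j)|=1+1+|\mathcal I(2j)|+|\mathcal I(2j+1)|=|\mathcal B_{2j}|+|\mathcal B_{2j+1}|=\dim\mathcal H_j$, so $\mathcal B_j$ spans $\mathcal H_j$.

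The only point needing care, and the nearest thing to an obstacle, is justifying the orthogonality between a descendant complement and the rotated pair. Take $k\in\mathcal I(2j)$. The inductive hypothesis applied to $\mathcal B_{2j}$ gives $\ket{e_k^{\mathbb R}}\perp\ket{\sigma_{2j}}$. Support disjointness gives $\ket{e_k^{\mathbb R}}\perp\ket{\sigma_{2j+1}}$, since $\ket{e_k^{\mathbb R}}\in\mathcal H_{2j}$. So $\ket{e_k^{\mathbb R}}$ is orthogonal to both new vectors; the case $k\in\mathcal I(2j+1)$ is symmetric. The same bookkeeping shows that the construction covers zero-weight nodes: the frame vectors never involve the weights $g^{\mathbb R}_{j,j}$, so vanishing metric factors cause no degeneracy. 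Applying the lemma at $j=1$ then gives the orthonormal basis in Eq.~\eqref{eq:root-moving-frame}.
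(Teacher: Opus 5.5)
Your proof is correct and follows the paper's argument: a leaf base case, then the orthogonal direct sum of the two child bases, then replacing the pair $\ket{\sigma_{2j}},\ket{\sigma_{2j+1}}$ by the $2\times2$ rotation while every strict-descendant complement is left fixed. Your explicit cardinality count and your check that each descendant complement is orthogonal to the rotated pair spell out steps the paper leaves implicit.
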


\begin{proof}
  At a leaf, $\mathcal B_j=\{\ket{b_\ell}\}$. Assume the statement for both children of an internal node $j$. Their leaf spaces are orthogonal. In the direct sum of the two child bases, replace the pair $\ket{\sigma_{2j}},\ket{\sigma_{2j+1}}$ by
  \begin{equation*}
    \begin{pmatrix}
      \ket{\sigma_j} \\[2pt]
      \ket{e_j^{\mathbb R}}
    \end{pmatrix}
    =
    \begin{pmatrix}
      \cos\theta_j & \sin\theta_j \\
      -\sin\theta_j & \cos\theta_j
    \end{pmatrix}
    \begin{pmatrix}
      \ket{\sigma_{2j}} \\[2pt]
      \ket{\sigma_{2j+1}}
    \end{pmatrix}.
  \end{equation*}
  This orthogonal change of basis leaves every strict-descendant complement unchanged, completing the induction.
\end{proof}

At the root, Lemma~\ref{lem:subtree-basis} gives Eq.~\eqref{eq:root-moving-frame}. For an internal node $j$, expansion along the unique root-to-$j$ path gives
\begin{equation}
  \ket{\psi^{\mathbb R}} = \ket{\psi_{\neg j}} + \sqrt{g^{\mathbb R}_{j,j}}\ket{\sigma_j}, \qquad \ket{\psi_{\neg j}}\in\mathcal H_j^\perp.
  \label{eq:subtree-state-decomposition}
\end{equation}
The first term and incoming coefficient are independent of $\theta_j$. The coefficient is the nonnegative product of ancestor sine and cosine factors; its square obeys Eq.~\eqref{eq:incoming-metric-recursion}. Since $\partial_{\theta_j}\ket{\sigma_j}=\ket{e_j^{\mathbb R}}$, Eq.~\eqref{eq:hopf-derivative-factorization} follows. Orthonormality gives Eq.~\eqref{eq:hopf-diagonal-metric}. For $g^{\mathbb R}_{j,j}=0$, the coordinate derivative vanishes by Eq.~\eqref{eq:subtree-state-decomposition}; Eq.~\eqref{eq:real-frame-column-recursion} still defines a normalized completion column.

For a Hermitian observable,
\begin{equation*}
  \partial_{\theta_j}\EO = \bra{\partial_{\theta_j}\psi}O\ket\psi + \bra\psi O\ket{\partial_{\theta_j}\psi} = 2\sqrt{g^{\mathbb R}_{j,j}}\Real\bra{e_j^{\mathbb R}}O\ket{\psi^{\mathbb R}},
\end{equation*}
which is Eq.~\eqref{eq:hopf-transition-gradient}.

\subsection{Depth completion and forward-state substitution}
\label{app:proof-depth-completion}

Define real path coefficients recursively by
\begin{equation}
  a_{0,0}=1, \qquad a_{d+1,2r}=a_{d,r}\cos\theta_{2^d+r}, \qquad a_{d+1,2r+1}=a_{d,r}\sin\theta_{2^d+r}.
  \label{eq:depth-path-coefficients}
\end{equation}
After the first $d+1$ depth layers,
\begin{equation}
  U_d\cdots U_0\ket0^{\otimes n} = \sum_{r=0}^{2^{d+1}-1} a_{d+1,r} \ket r_{q_n\cdots q_{n-d}} \ket0_{q_{n-d-1}\cdots q_1}.
  \label{eq:depth-prefix-state}
\end{equation}
For $d=0$, this is the rotation in Eq.~\eqref{eq:paper-Ry-convention} on $q_n$. Assume the statement through depth $d-1$. In the component with prefix $r$, exactly one projector in $U_d$ is active, and it rotates the fresh target $q_{n-d}$ from $\ket0$ to $\cos\theta_{2^d+r}\ket0+\sin\theta_{2^d+r}\ket1$. This gives the two recursions in Eq.~\eqref{eq:depth-path-coefficients}, so Eq.~\eqref{eq:depth-prefix-state} follows by induction.

At $d=n-1$, the coefficients are the real Hopf leaf amplitudes. Hence $\Uchk\ket0^{\otimes n}=\ket{\psi^{\mathbb R}}$, proving Eq.~\eqref{eq:depth-hopf-state}; applying $\Dph$ gives $\Dph\Uchk\ket0^{\otimes n}=\ket{\psi^{\mathbb C}}$, which, together with the native complex state column, proves Eq.~\eqref{eq:designated-complex-preparation}.

For forward substitution, let $U\ket0^{\otimes n}=V\ket0^{\otimes n}=\ket\psi$. Either forward block gives the same complete joint state, so identical subsequent unitaries and measurements give the same final state and output distribution. This establishes state-column substitution for the real-global, complex-global, direct-phase, and checkpoint records.

The forward and reverse circuits may therefore use different unitary completions of the same prepared state. This is a correctness statement: the assigned-work comparison additionally fixes the direct-angle forward and reverse schedules of Sec.~\ref{subsec:assigned-ledger}. A state-equivalent recompilation with different elementary physical angles requires its own ledger.

\subsection{Addressed real-frame columns}
\label{app:proof-real-frame-columns}

Every nonzero integer $k<N$ has a unique factorization $k=(2r+1)2^t$ with $t\geq0$. Setting $d=n-t-1$ gives $0\leq r<2^d$ and $k=(2r+1)2^{n-d-1}=\lambda(2^d+r)$, so the marker map is a bijection.

Let $W_j$ be the product of addressed rotations in the subtree rooted at $j$, with $G_j$ applied before its strict descendants. The child products act on orthogonal leaf spaces, and $W_j=(W_{2j}\oplus W_{2j+1})G_j$. At a leaf, the anchor column is already the leaf state. Assuming the child anchor columns map to the child subtree states, Eqs.~\eqref{eq:frame-gate-anchor} and~\eqref{eq:frame-gate-marker} give
\begin{align*}
  W_j\ket{b_{\ell_0(j)}} &= \cos\theta_j\ket{\sigma_{2j}} + \sin\theta_j\ket{\sigma_{2j+1}} = \ket{\sigma_j},\\
  W_j\ket{b_{\lambda(j)}} &= -\sin\theta_j\ket{\sigma_{2j}} + \cos\theta_j\ket{\sigma_{2j+1}} = \ket{e_j^{\mathbb R}}.
\end{align*}
The same induction preserves all strict-descendant marker columns. At the root, $\ell_0(1)=0$, proving Eq.~\eqref{eq:frame-column-identities}.

\subsection{Global real readout and Walsh decoding}
\label{app:proof-real-readout}

For any admissible real forward preparation, the state-column contract gives $\WR^\dagger U_{\mathrm{fwd}}^{\mathbb R}\ket0=\ket0$ and $\WR^\dagger O U_{\mathrm{fwd}}^{\mathbb R}\ket0 =\WR^\dagger O\WR\ket0$. Hence
\begin{align*}
  \ket{\Omega_{\mathbb R}} &= \frac{ \ket0_a\ket0^{\otimes n} + \ket1_a\WR^\dagger O\WR\ket0^{\otimes n} }{\sqrt2},\\
  \E\!\left[(-1)^{b_m+\lambda(j)\cdot\boldsymbol y}\right] &= \Real\bra{e_j^{\mathbb R}}O\ket{\psi^{\mathbb R}},
\end{align*}
where the second line uses Eq.~\eqref{eq:frame-parity-identity} at $k=\lambda(j)$ and Eq.~\eqref{eq:frame-column-identities}. Multiplication by $2\sqrt{g^{\mathbb R}_{j,j}}$ proves Eqs.~\eqref{eq:real-hopf-estimator} and \eqref{eq:real-hopf-estimator-properties}. Averaging independent copies gives Eq.~\eqref{eq:real-hopf-direct-average}; the same argument after phase dressing gives Eq.~\eqref{eq:complex-magnitude-direct-average}. A zero metric factor makes the record zero for every outcome.

The metric mass is conserved across each tree depth, since the root weight is one and every parent splits by $\cos^2\theta_j+\sin^2\theta_j=1$. Therefore
\begin{align*}
  \sum_{r=0}^{2^d-1}g^{\mathbb R}_{2^d+r,2^d+r}&=1,\\
  \|Z_d^{\mathrm{fr}}\|_2^2 &=4\sum_{r=0}^{2^d-1}g^{\mathbb R}_{2^d+r,2^d+r}=4,
\end{align*}
which is Eq.~\eqref{eq:global-depth-fixed-norm}.

Finally,
\begin{equation*}
  \frac1S\sum_{t=1}^S (-1)^{b_{m,t}+k\cdot\boldsymbol y_t} = \frac1S\sum_{\boldsymbol y} \left[ \sum_{t:\boldsymbol y_t=\boldsymbol y}(-1)^{b_{m,t}} \right] (-1)^{k\cdot\boldsymbol y} = \frac1S\sum_{\boldsymbol y}h(\boldsymbol y)(-1)^{k\cdot\boldsymbol y}.
\end{equation*}
This is Eq.~\eqref{eq:fwht-moments}. The fast Walsh--Hadamard transform uses $\bigO(N\log N)$ additions and subtractions~\cite{fino1976walsh}; one tree pass computes all metric factors in $\bigO(N)$ further operations.

\subsection{Complex magnitude and direct phase identities}
\label{app:proof-complex}

Since $\Dph$ is independent of every magnitude angle and is unitary,
\begin{align*}
  \partial_{\theta_j}\ket{\psi^{\mathbb C}} &= \Dph\partial_{\theta_j}\ket{\psi_{\mathrm{mag}}} = \sqrt{g^{\mathbb C}_{j,j}}\ket{e_j^{\mathbb C}},\\
  W_{\mathbb C}\ket0&=\ket{\psi^{\mathbb C}}, & W_{\mathbb C}\ket{b_{\lambda(j)}}&=\ket{e_j^{\mathbb C}}.
\end{align*}
The first line proves Eq.~\eqref{eq:complex-magnitude-derivative}; the second uses $W_{\mathbb C}=\Dph\WR$ and the real frame columns. For any admissible complex forward preparation,
\begin{align*}
  W_{\mathbb C}^\dagger U_{\mathrm{fwd}}^{\mathbb C}\ket0 &=\ket0,\\
  W_{\mathbb C}^\dagger O U_{\mathrm{fwd}}^{\mathbb C}\ket0 &=W_{\mathbb C}^\dagger O\ket{\psi^{\mathbb C}},\\
  \E(-1)^{b_m+\lambda(j)\cdot\boldsymbol y} &= \Real\bra{e_j^{\mathbb C}}O\ket{\psi^{\mathbb C}}.
\end{align*}
Thus Eqs.~\eqref{eq:complex-magnitude-estimator} and \eqref{eq:complex-magnitude-estimator-bound} follow. Phase dressing preserves the metric factors, so each depth record has norm two.

For a phase coordinate, direct differentiation and the controlled-interference identity give
\begin{align*}
  \partial_{\theta_{N+\ell}}\ket{\psi^{\mathbb C}} &= \iu\Pi_\ell\ket{\psi^{\mathbb C}} = \iu x_\ell\ket{b_\ell},\\
  \partial_{\theta_{N+\ell}}\EO &= 2\Imag\bra{\psi^{\mathbb C}}\Pi_\ell O\ket{\psi^{\mathbb C}},\\
  \ket{\Omega_{\mathrm{ph}}} &= \frac{\ket0_a\ket{\psi^{\mathbb C}}+ \ket1_aO\ket{\psi^{\mathbb C}}}{\sqrt2},\\
  \bra{\Omega_{\mathrm{ph}}}Y_a\otimes\Pi_\ell\ket{\Omega_{\mathrm{ph}}} &= \Imag\bra{\psi^{\mathbb C}}\Pi_\ell O\ket{\psi^{\mathbb C}}.
\end{align*}
The first line is Eq.~\eqref{eq:complex-phase-raw-tangent}; the last follows from Eq.~\eqref{eq:generic-Y-quadrature} with $A=\Pi_\ell$. The measured eigenvalue is $(-1)^{b_p}\indicator{\widehat\ell=\ell}$, so multiplication by two proves Eqs.~\eqref{eq:complex-phase-estimator}-- \eqref{eq:complex-phase-onehot-vector}.

The phase tangents also satisfy $\Real\langle\partial_{\theta_{N+\ell}}\psi^{\mathbb C} |\partial_{\theta_{N+m}}\psi^{\mathbb C}\rangle =p_\ell\delta_{\ell m}$, with $p_\ell:=|x_\ell|^2$. Hence $G_{\mathrm{ph}}^{\mathrm{sph}}=\operatorname{diag}(\boldsymbol p)$; subtracting the state-parallel part gives the unused projective block $G_{\mathrm{ph}}^{\mathrm{FS}}=\operatorname{diag}(\boldsymbol p) -\boldsymbol p\boldsymbol p^{\mathsf T}$.

For the integrated four-qubit $R_{\mathbb C}$ compiler in Appendix~\ref{app:four-qubit-RC-compiler}, let $D_{\widehat\phi}:=\sum_{\ell=0}^{15}e^{\iu\widehat\phi_\ell} \ket{b_\ell}\!\bra{b_\ell}$. Suppose the two child-anchor columns have already expanded to
\begin{align*}
  \ket{\widetilde\sigma_{2j}} &=e^{-\iu\Phi_{2j}}D_{\widehat\phi}\ket{\sigma_{2j}},\\
  \ket{\widetilde\sigma_{2j+1}} &=e^{-\iu\Phi_{2j+1}}D_{\widehat\phi}\ket{\sigma_{2j+1}}.
\end{align*}
At a leaf, $\ket{b_\ell}=e^{-\iu\Phi_{16+\ell}}D_{\widehat\phi}\ket{b_\ell}$, so the induction begins. The first and second columns of the addressed gate in Eq.~\eqref{eq:app-WC-RC-gate} produce, respectively,
\begin{align*}
  e^{-\iu\Phi_{2j+1}}\cos\theta_j\ket{\widetilde\sigma_{2j}} + e^{-\iu\Phi_{2j}}\sin\theta_j\ket{\widetilde\sigma_{2j+1}} &= e^{-\iu\Phi_j}D_{\widehat\phi}\ket{\sigma_j},\\
  -e^{\iu\Phi_{2j}}\sin\theta_j\ket{\widetilde\sigma_{2j}} + e^{\iu\Phi_{2j+1}}\cos\theta_j\ket{\widetilde\sigma_{2j+1}} &= D_{\widehat\phi}\ket{e_j^{\mathbb R}}.
\end{align*}
At the root, $\Phi_1=0$. The state column and all fifteen marker columns are therefore those of $D_{\widehat\phi}\WR$. These are all sixteen columns, so
\begin{equation*}
  \widetilde W_{\mathbb C}^{(R_{\mathbb C})} = D_{\widehat\phi}\WR = e^{-\iu\bar\phi}\Dph\WR = e^{-\iu\bar\phi}W_{\mathbb C},
\end{equation*}
proving Eq.~\eqref{eq:app-WC-RC-compiler-identity}. The common phase on the complete decoded state leaves measurement probabilities unchanged.

\subsection{Checkpoint transport and active-interface substitution}
\label{app:proof-real-checkpoint}

The generators within one depth commute because their prefix projectors are orthogonal. For $j=2^d+r$, using $\Uchk=B_dA_d$ gives
\begin{align*}
  \partial_{\theta_j}U_d &=-\iu Q_{d,r}U_d,\\
  \partial_{\theta_j}\ket{\psi^{\mathbb R}} &=-\iu B_dQ_{d,r}\ket{\alpha_d},\\
  \partial_{\theta_j}\EO &= 2\Imag\bra{\beta_d^{\mathbb R}}Q_{d,r}\ket{\alpha_d}, \qquad \ket{\beta_d^{\mathbb R}}:=B_d^\dagger O\ket{\psi^{\mathbb R}}.
\end{align*}
The last line is Eq.~\eqref{eq:checkpoint-adjoint-identity}. For the state in Eq.~\eqref{eq:checkpoint-interference-state}, Eq.~\eqref{eq:generic-Y-quadrature} gives
\begin{align*}
  \bra{\Omega_d^{\mathbb R}}Y_a\otimes Q_{d,r}\ket{\Omega_d^{\mathbb R}} &= -\Imag\bra{\beta_d^{\mathbb R}}Q_{d,r}\ket{\alpha_d},\\
  \E\!\left[(-1)^{b_c+b_t}\indicator{\widehat r=r}\right] &= \bra{\Omega_d^{\mathbb R}}Y_a\otimes Q_{d,r}\ket{\Omega_d^{\mathbb R}}.
\end{align*}
Measuring the prefix in $Z$ realizes $\ket r\!\bra r$, measuring the target in $Y$ realizes the Pauli $Y$, and the lower suffix may be discarded because $Q_{d,r}$ acts there as the identity. Multiplication by $-2$ proves Eqs.~\eqref{eq:checkpoint-onehot-vector} and \eqref{eq:checkpoint-depth-gradient}. Exactly one prefix is observed, so Eq.~\eqref{eq:checkpoint-fixed-norm} follows.

Equation~\eqref{eq:depth-prefix-state} shows that $\ket{\alpha_d}$ has a zero lower suffix. Since $Q_{d,r}$ acts there as the identity, both $\ket{\alpha_d}$ and $Q_{d,r}\ket{\alpha_d}$ are fixed by the active-interface projector $P_d$. For the complex state, differentiation and the interface contract give
\begin{align}
  \partial_{\theta_j}\ket{\psi^{\mathbb C}} &= -\iu\Dph B_dQ_{d,r}\ket{\alpha_d} = -\iu B_{d,\mathbb C}Q_{d,r}\ket{\alpha_d},
  \label{eq:proof-complex-checkpoint-derivative}\\
  \ket{\psi^{\mathbb C}} &= \Dph B_d\ket{\alpha_d} = B_{d,\mathbb C}\ket{\alpha_d}. \nonumber
\end{align}
Thus the reference branch decodes to $\ket{\alpha_d}$. With $\ket{\beta_d^{\mathbb C}}=B_{d,\mathbb C}^\dagger O \ket{\psi^{\mathbb C}}$, the same derivation applies, and its correlators depend only on $B_{d,\mathbb C}P_d$. The separated choice $B_{d,\mathbb C}=\Dph B_d$ gives Eq.~\eqref{eq:complex-checkpoint-circuit-order}. The explicit four-qubit first-column calculation in Appendix~\ref{app:four-qubit-complex-checkpoint} proves Eq.~\eqref{eq:app-B2C-interface-identity}; substituting it into Eq.~\eqref{eq:proof-complex-checkpoint-derivative} gives Eq.~\eqref{eq:app-complex-depth2-checkpoint-identity}. Generators within one depth commute, while different depths retain their physical circuit order.

\subsection{Fixed-norm concentration and complete-chart allocations}
\label{app:proof-resource-criterion}

Let $\widehat\mu=S^{-1}\sum_tZ_t$ and $\mu=\E Z_t$. Using $\|v\|_\infty\leq\|v\|_2$, Jensen's inequality, and independence gives
\begin{align*}
  \E\|Z_1-\mu\|_2^2 &= \E\|Z_1\|_2^2-\|\mu\|_2^2 \leq B^2,\\
  \E\|\widehat\mu-\mu\|_\infty &\leq \sqrt{\E\|\widehat\mu-\mu\|_2^2} \leq \frac{B}{\sqrt S}.
\end{align*}
Replacing one sample changes $\|\widehat\mu-\mu\|_\infty$ by at most $2B/S$. McDiarmid's inequality~\cite{mcdiarmid1989bounded} therefore gives
\begin{equation*}
  \Prob\!\left[ \|\widehat\mu-\mu\|_\infty - \E\|\widehat\mu-\mu\|_\infty >u \right] \leq \exp\!\left(-\frac{Su^2}{2B^2}\right).
\end{equation*}
Setting $u=B\sqrt{2\log(1/\eta)/S}$ proves Eq.~\eqref{eq:fixed-norm-tail}.

Every magnitude-depth, checkpoint, and direct phase record has norm two, so $S_\star(\eta)$ controls the $\ell_\infty$ error of the entire record vector. At depth $d$ the vector has $2^d$ coordinates, but its norm remains two; therefore the block width does not enter the count and no coordinate-wise union bound is needed.

There are $n$ magnitude depths. Assigning failure probability $\delta/n$ to each real depth, or $\delta/(2n)$ to each complex magnitude depth while reserving $\delta/2$ for the direct phase family, gives
\begin{equation*}
  S_\star(\delta/n),\ S_\star(\delta/(2n)) = \bigO\!\left( \frac{1+\log(n/\delta)}{\varepsilon^2} \right), \qquad S_\star(\delta/2) = \bigO\!\left( \frac{1+\log(1/\delta)}{\varepsilon^2} \right).
\end{equation*}
At fixed $\varepsilon$ and $\delta$, one magnitude stream costs $\bigO(\log n)$ executions and the phase stream costs $\bigO(1)$. One global outcome supplies all $n$ magnitude-depth records; a complete checkpoint schedule uses $n$ depth-specific streams. Thus
\begin{align*}
  S_{\mathrm{glob}}^{\mathbb K} &=\bigO(\log n),\\
  S_{\mathrm{chk}}^{\mathbb K} &=\bigO(n\log n),
\end{align*}
with the $\bigO(1)$ phase term included only in the complex chart. The union bound does not require the depth records decoded from one global outcome to be independent. These counts reproduce Eqs.~\eqref{eq:complete-chart-shot-allocations} and \eqref{eq:global-checkpoint-shot-scaling}.

\medskip
\noindent\textbf{Complete-vector Euclidean accuracy.}
Concatenate the $n$ magnitude-depth records from one global outcome as
$Z_{\mathrm{mag}}:=Z_0^{\mathrm{fr}}\oplus\cdots\oplus
  Z_{n-1}^{\mathrm{fr}}$, and set
$\mu_{\mathrm{mag}}:=\E Z_{\mathrm{mag}}$ and
$\widehat\mu_{\mathrm{mag}}:=S^{-1}\sum_{t=1}^{S}Z_{\mathrm{mag},t}$.
Since every block has norm two,
$\|Z_{\mathrm{mag}}\|_2=2\sqrt n$.  Applying the same bounded-difference
argument directly to the Euclidean norm gives
\begin{equation}
  \Prob\!\left[ \left\|\widehat\mu_{\mathrm{mag}}-\mu_{\mathrm{mag}}\right\|_2 > \frac{2\sqrt n\{1+\sqrt{2\log(1/\delta)}\}}{\sqrt S} \right] \leq\delta.
  \label{eq:complete-magnitude-l2-tail}
\end{equation}
A sufficient allocation for fixed complete-magnitude $\ell_2$ accuracy $\varepsilon_2$ is
\begin{equation*}
  S = \bigO\!\left( \frac{n\{1+\log(1/\delta)\}}{\varepsilon_2^2} \right) = \bigO\!\left( \frac{\log M\{1+\log(1/\delta)\}}{\varepsilon_2^2} \right).
\end{equation*}
For the complex chart, the separate norm-two phase stream adds only a constant-order term under any fixed split of Euclidean error and confidence.

\noindent\textbf{Relative, directional, and metric-conditioned accuracy.}
Let $\widehat{\boldsymbol v}=\boldsymbol v+\boldsymbol e$, where
$\boldsymbol v$ is the complete real gradient or complex magnitude block,
$\mathcal G:=\|\boldsymbol v\|_2$, and
$\|\boldsymbol e\|_2\leq\xi<\mathcal G$.  Then
\begin{equation*}
  \boldsymbol v^{\mathsf T}\widehat{\boldsymbol v} \geq\mathcal G(\mathcal G-\xi)>0, \qquad \left\|\frac{\widehat{\boldsymbol v}}{\|\widehat{\boldsymbol v}\|_2} -\frac{\boldsymbol v}{\mathcal G}\right\|_2 \leq\frac{2\xi}{\mathcal G}.
\end{equation*}
Thus $-\widehat{\boldsymbol v}$ is a strict descent direction. Setting $\xi=\rho\mathcal G$, $0<\rho<1$, in Eq.~\eqref{eq:complete-magnitude-l2-tail} gives, for either stated block,
\begin{equation*}
  S_{\mathrm{rel}}=\left\lceil \frac{4n\{1+\sqrt{2\log(1/\delta)}\}^2} {\rho^2\mathcal G^2}\right\rceil.
\end{equation*}
A fixed magnitude/phase allocation preserves the $n$ dependence for the full complex gradient. No uniform relative guarantee holds when the reference norm vanishes, $\mathcal G=0$.

For the theorem-adjacent separation, fix a parameter point and an active real coordinate $k$. Construct $O_k$ at that point and hold it fixed during differentiation:
\begin{equation*}
  O_k=I-(\ket{\psi^{\mathbb R}}-\ket{e_k^{\mathbb R}}) (\bra{\psi^{\mathbb R}}-\bra{e_k^{\mathbb R}}).
\end{equation*}
Orthonormality gives $O_k^\dagger=O_k$, $O_k^2=I$, and $O_k\ket{\psi^{\mathbb R}}=\ket{e_k^{\mathbb R}}$, so $\partial_{\theta_j}E_{O_k}=2\sqrt{g^{\mathbb R}_{k,k}}\delta_{jk}$ and $\chi_k=2$. If $0<g^{\mathbb R}_{k,k}<(\varepsilon/2)^2$, the zero vector has raw $\ell_\infty$ error below $\varepsilon$ but frame error two.

For a global magnitude record with $g^{\mathbb K}_{j,j}>0$, $Z_j\in\{\pm2\sqrt{g^{\mathbb K}_{j,j}}\}$, $\E Z_j=\partial_{\theta_j}E_O$, and $Z_j^2=4g^{\mathbb K}_{j,j}$. Unitarity gives $|\partial_{\theta_j}E_O|\leq2\sqrt{g^{\mathbb K}_{j,j}}$, while, for $\tau>0$,
\begin{equation*}
  \E\!\left[\left(\frac{Z_j}{g^{\mathbb K}_{j,j}}\right)^2\right] =\frac4{g^{\mathbb K}_{j,j}}, \qquad \E\!\left[\left(\frac{Z_j}{g^{\mathbb K}_{j,j}+\tau}\right)^2\right] =\frac{4g^{\mathbb K}_{j,j}}{(g^{\mathbb K}_{j,j}+\tau)^2} \leq\frac1\tau.
\end{equation*}
Small metric weights therefore suppress ordinary noise but condition an unregularized natural gradient. Damping or freezing them is an optimizer choice; the ordinary record vanishes when $g^{\mathbb K}_{j,j}=0$.

Since $n=\log_2(M+1)$ in the real chart and $n=\log_2(M+1)-1$ in the complex chart, fixed absolute coordinate accuracy and confidence give $S_{\mathrm{glob}}^{\mathbb K}=\bigO(\log\log M)$ and $S_{\mathrm{chk}}^{\mathbb K}=\bigO(\log M\log\log M)$; per-execution Hopf work and the materialized $M$-entry output remain separate.

\subsection{Assigned Hopf CNOT ledger and finite charges}
\label{app:proof-compiler-ledger}

Use the direct-angle count $C_{\mathrm H}$ of Sec.~\ref{subsec:assigned-ledger}; controlled-reflection, measurement, and classical costs remain separate. The controlled-rotation charge is
\begin{equation*}
  c_{R_y}(m) =
  \begin{cases}
    0, & m=0, \\
    2^{m+1}-2, & 1\leq m\leq4, \\
    16(m+1)-40, & m\geq5,
  \end{cases}
\end{equation*}
where the last line uses Ref.~\cite{vale2024decomposition} and open controls use exact $X$ conjugations. Set
\begin{align*}
  C_U&:=C_{\mathrm H}(\Uchk)=\sum_{d=0}^{n-1}2^dc_{R_y}(d),\\
  C_W&:=C_{\mathrm H}(\WR)=(N-1)c_{R_y}(n-1),\\
  C_{B_d}&:=C_{\mathrm H}(B_d^\dagger) =\sum_{h=d+1}^{n-1}2^hc_{R_y}(h).
\end{align*}
The final depth contributes $(N/2)c_{R_y}(n-1)$ and every suffix is contained in $\Uchk$, so
\begin{equation}
  C_W\leq2\left(1-\frac1N\right)C_U\leq2C_U, \qquad C_{B_d}\leq C_U,
  \label{eq:frame-depth-cost-ratio}
\end{equation}
with $C_U,C_W=\bigO(nN)$ for $n\geq2$. In the real chart set $(c_{\mathrm{fwd}},c_{\mathrm{fr}},c_d)=(C_U,C_W,C_{B_d})$; in the separated complex chart add the common phase charge $C_D:=C_{\mathrm H}(\Dph)$ to all three. Equation~\eqref{eq:frame-depth-cost-ratio} then gives
\begin{equation*}
  c_{\mathrm{fr}}^{\mathbb K}\leq2c_{\mathrm{fwd}}^{\mathbb K}, \qquad c_d^{\mathbb K}\leq c_{\mathrm{fwd}}^{\mathbb K},
\end{equation*}
which proves Eq.~\eqref{eq:hopf-reverse-forward-comparability}. Native-forward and integrated alternatives use the same functional but need not obey these finite constants; Appendix~\ref{app:four-qubit-ledger} gives their explicit four-qubit charges.

\subsection{Exact multiplexed recompilation}
\label{app:compiler-robustness}

The direct-angle ledger keeps coordinates and physical controls in one geometry. One exact multiplexed realization preserves the estimator while recombining those angles and therefore has its own ledger. Each forward depth $U_d=\sum_{r=0}^{2^d-1}\ketbra{r}{r}\otimes R_y(\theta_{2^d+r})\otimes I_{2^{n-d-1}}$ is uniformly controlled. Fix the unpruned Gray-code templates, with $2^d$ core CNOTs for $d\geq1$ and none at the root~\cite{bergholm2005uniformly}. Their forward charge is
\begin{equation*}
  C_{\mathrm{ucr}}(\Uchk)=\sum_{d=1}^{n-1}2^d=N-2.
\end{equation*}

For an addressed depth $d<n-1$, write a basis label as prefix $r$, target bit $q$, and lower suffix $z$. A reusable clean flag computes $F_d\ket{r,q,z}\ket0_f=\ket{r,q,z}\ket{[z=0]}_f$. Let $V_{d,f}$ apply the prefix-selected $R_y(\theta_{2^d+r})$ to the target when the flag is one and act as the identity when it is zero. This leaves the suffix unchanged, so $F_d^\dagger V_{d,f}F_d$ implements the addressed layer $V_d$ on a clean flag and returns the flag to zero. The final depth needs no flag. Composing the layers gives the full-frame contract
\begin{equation}
  \widetilde W_{\mathbb R}(\ket\varphi\ket0_f) =(\WR\ket\varphi)\ket0_f \qquad\text{for every }\ket\varphi.
  \label{eq:optimized-clean-flag-contract}
\end{equation}
The nonfinal multiplexors have $d+1$ controls, hence
\begin{equation*}
  C_{\mathrm{ucr-core}}(\WR) \leq\sum_{d=0}^{n-2}2^{d+1}+2^{n-1}=\frac32N-2.
\end{equation*}
Quadratic exact multi-controlled-$X$ synthesis contributes $2\sum_{m=1}^{n-1}\bigO(m^2)=\bigO(n^3)$ for computing and uncomputing the suffix predicates~\cite{barenco1995elementary}. Thus $C(\WR)=\bigO(N+n^3)=\bigO(N)$; exact diagonal synthesis also gives $C(\Dph)=\bigO(N)$~\cite{bullock2004smaller}.

For these fixed templates and $n\geq2$, a real scalar execution uses $N-2$ forward-core CNOTs and a global-record execution adds at most $\frac32N-2+\bigO(n^3)$ for the inverse frame. The matched per-execution ratio is bounded by
\begin{equation*}
  1+\frac{\frac32N-2+\bigO(n^3)}{N-2}=\bigO(1).
\end{equation*}
Adding the same nonnegative controlled-reflection charge preserves this bound. In the separated complex construction, the same $\bigO(N)$ phase-layer charge is added to both the forward and reverse blocks, again giving a constant-order ratio. The global method therefore retains its $\bigO(\log\log M)$ statistical overhead under this one exact recompilation. The theorem continues to use the direct-angle ledger; no compiler-independent finite ratio is asserted.

\subsection{Proof of the complete-gradient Hopf
  quantum-backpropagation theorem}
\label{app:proof-hopf-qbp}

\begin{proof}[Proof of Theorem~\ref{thm:hopf-quantum-backpropagation}]
  Equations~\eqref{eq:real-hopf-estimator-properties}, \eqref{eq:complex-magnitude-estimator-bound}, and \eqref{eq:complex-phase-estimator-properties} give unbiased global records for all real and complex coordinates, while Eq.~\eqref{eq:checkpoint-depth-gradient} gives the complete block at each selected depth. Their norm-two bounds and the allocations in Eq.~\eqref{eq:complete-chart-shot-allocations} yield the target $(\varepsilon,\delta)$ accuracy. Global outcomes are reused across all magnitude depths, whereas checkpoints use one stream per depth, giving Eq.~\eqref{eq:global-checkpoint-shot-scaling}.

  The matched scalar and gradient programs share the direct-angle forward compiler and exact block formulas in Eqs.~\eqref{eq:matched-scalar-hopf-work} and \eqref{eq:hopf-work-block-formulas}. Substituting Eq.~\eqref{eq:hopf-reverse-forward-comparability} gives Eqs.~\eqref{eq:hopf-global-work-ratio} and \eqref{eq:hopf-checkpoint-work-ratio}; with $c_{\mathrm{fwd}}^{\mathbb K}=\bigO(M\log M)$, this gives Eq.~\eqref{eq:hopf-absolute-work-scaling}.

  Shared live workspace, ancilla-free sequential reverse blocks, and $\bigO(M)$ classical output give constant storage overhead. Using $n=\log_2(M+1)$ in the real chart and $n=\log_2(M+1)-1$ in the complex chart gives the stated $\bigO(\log\log M)$ global and $\bigO(\log M\log\log M)$ checkpoint classifications.
\end{proof}

\end{document}